\patchcmd{\@settitle}{\uppercasenonmath\@title}{\scshape\large}{}{}
\patchcmd{\@setauthors}{\MakeUppercase}{\scshape\normalsize}{}{}
\newcommand{\jscom}{\todo[author=JS, color=yellow!50, size=\small]}
\newcommand{\jscomil}{\todo[inline, author=JS, color=yellow!50, size=\small, caption={}]}
\newenvironment{proofClaim}[1][]{\ifthenelse{\equal{#1}{}}{\begin{proof}}{\begin{proof}[#1]}}{\end{proof}}
\DeclareMathOperator*{\argmin}{arg\,min}
\DeclareMathOperator*{\argmax}{arg\,max}
\newcommand{\inte}{\mathrm{int}}
\newcommand{\con}{\mathrm{con}}
\newcommand{\inter}[1]{\mathrm{int}\left(#1\right)}
\newcommand{\NI}{\Psi}
\newcommand{\proxy}{\eta}
\newcommand{\proxyagg}{\eta_{\mathrm{agg}}}
\newcommand{\VBR}{\Phi}
\newcommand{\BR}{\mathrm{BR}}
\newcommand{\norm}[1]{\lVert #1 \rVert}
\newcommand{\abs}[1]{\lvert#1\rvert}
\newcommand{\extray}{r}
\newcommand{\R}{\mathbb{R}}
\newcommand{\Z}{\mathbb{Z}}
\newcommand{\FeasRt}{F_t}
\NewDocumentCommand{\freeset}{O{}}{S^{#1}}
\newcommand{\powset}{\mathcal{P}}
\newcommand{\BRcom}{\mathcal{BR}}
\newcommand{\Fa}{\mathrm{Fa}}
\newcommand{\Opt}{\mathrm{Opt}}
\newcommand{\Nes}{\mathcal{E}}
\newcommand{\Nespi}{\mathcal{E}^\pi}
\newcommand{\Corner}{K}
\newcommand{\Nbas}{\mathcal{N}}
\NewDocumentCommand{\FeasStrats}{O{}}{W_{#1}}
\NewDocumentCommand{\FeasStratsInt}{O{}}{W_{#1}^\inte}
\NewDocumentCommand{\FeasStratsCon}{O{}}{W_{#1}^\con}
\NewDocumentCommand{\FeasStratsRel}{O{}}{\hat{W}_{#1}}
\newcommand{\Rall}{\prod_{j \in N}\R^{k_j+l_j}}
\newcommand{\Rallmini}{\prod_{j \neq i}\R^{k_j+l_j}}
\newcommand{\AddThm}[2]{
	\newtheorem{#1}[thmcntr]{#2}
	\AddToHook{env/#1/begin}{\crefalias{thmcntr}{#1}}
}
\theoremstyle{definition}
\newcommand{\wrt}{w.r.t.\ }
\NewDocumentCommand{\NeInt}{O{\ }}{TODO{#1}}
\newcommand{\defset}[3][\defsep]{\set{#2#1#3}}
\newcommand{\Defset}[3][\defsep]{\Set{#2#1#3}}
\newcommand{\set}[1]{\{#1\}}
\newcommand{\Set}[1]{\left\{#1\right\}}
\newcommand{\st}{\text{s.t.}}
\newcommand{\define}{\mathrel{{\mathop:}{=}}}
\newcommand{\enifed}{\mathrel{{=}{\mathop:}}}
\newcommand{\nonNegInts}{\mathbb{Z}_{\geq 0}}
\newcommand{\revised}[1]{\begingroup#1\endgroup}
\newcommand{\rev}[1]{#1}
\newcolumntype{d}[1]{D{.}{.}{#1}}
\newlist{thmparts}{enumerate}{1}
\setlist[thmparts]{
	label=\alph*)
}
\apptocmd{\cref@getref}{\xdef\@lastusedlabel{#1}}{}{error}
	\StrCount{\@lastusedlabel}{:}[\LastColonPos]%
	\StrCount{\@lastusedlabel}{:}[\LastColonPos]%
\Crefname{assumption}{Assumption}{Assumption}
\Crefname{proposition}{Proposition}{Propositions}
\Crefname{theorem}{Theorem}{Theorems}
\Crefname{lemma}{Lemma}{Lemmas}
\begin{document}

\title[Branch-and-Cut for Mixed-Integer NEPs]%
{Branch-and-Cut for Mixed-Integer\\
  Nash Equilibrium Problems}

\author[A. Duguet, T. Harks, M. Schmidt, J. Schwarz]%
{Aloïs Duguet, Tobias Harks, Martin Schmidt, Julian Schwarz}

\address[A. Duguet, M. Schmidt]{%
  Trier University,
  Department of Mathematics,
  Universitätsring 15,
  54296 Trier,
  Germany}
\email{duguet@uni-trier.de}
\email{martin.schmidt@uni-trier.de}

\address[T. Harks, J. Schwarz]{%
  University of Passau,
  Faculty of Computer Science and Mathematics,
  94032 Passau,
  Germany}
\email{tobias.harks@uni-passau.de}
\email{julian.schwarz@uni-passau.de}

\date{\today}

\begin{abstract}
  \rev{We study Nash equilibrium problems with mixed-integer variables in which each player solves a mixed-integer optimization problem parameterized by the rivals' strategies. We distinguish between standard Nash equilibrium problems (NEPs), where parameterization affects only the objective functions, and generalized Nash equilibrium problems (GNEPs), where strategy sets may additionally depend on rivals' strategies. We introduce a branch-and-cut (B\&C) algorithm for such mixed-integer games that, upon termination, either computes a pure Nash equilibrium or decides their non-existence. Our approach reformulates the game as a bilevel problem using the Nikaido--Isoda function. We then use bilevel-optimization techniques to get a computationally tractable relaxation of this reformulation and embed it into a B\&C framework. We derive sufficient conditions for the existence of suitable cuts and finite termination of our method depending on the setting. For GNEPs, we adapt the idea of intersection cuts from bilevel optimization and mixed-integer linear optimization. We can guarantee the existence of such cuts under suitable assumptions, which are particularly fulfilled for pure-integer GNEPs with decoupled concave objectives and linear coupling constraints. For NEPs, we show that suitable cuts always exist via best-response inequalities and prove that our B\&C method terminates in finite time whenever the set of best-response sets is finite. We show that this condition is fulfilled for the important special cases of (i) players' cost functions being concave in their own continuous strategies and (ii) the players' cost functions only depending on their own strategy and the rivals' integer strategy components. Finally, we present preliminary numerical results for two different types of knapsack games, a game based on capacitated flow problems, and integer NEPs with quadratic objectives.
}

\end{abstract}

\keywords{Generalized Nash equilibrium problems,
Mixed-integer games,
Bilevel optimization,
Branch-and-cut,
Cutting planes%
%
%
}
\subjclass[2020]{90C11, 
90C47, 
90C57, 
91-08
%

\maketitle

\section{Introduction}
\label{sec:introduction}

\rev{Nash equilibrium problems have applications in
  various domains ranging from market games in economics
  \parencite{Debreu54}, communication networks \parencite{Kelly98},
  transport systems \parencite{Beckmann56}, to electricity markets
  \parencite{Anderson13}.
  While the computation of Nash equilibria has been
  extensively studied for decades, the focus has traditionally been on
  continuous and convex settings and only over the recent decades, attention has
  also turned towards nonconvex games.
  From an application perspective, particularly important are classes of
  games with integer constraints on the player's decision
  variables \parencite{Koeppe11}.
  Prominent examples for this type of restriction include
  integer-splittable congestion games \parencite{Rosenthal73b}, market
  games with discrete goods \parencite{Kelso82} or energy markets with
  discrete quantities \parencite{guo2021copositive}.
  A fundamental question for these games is the existence of pure Nash
  equilibria (NE)---which arguably is the most convincing solution concept.
  Due to the non-convexity of strategy spaces stemming from \emph{integer
  constraints}, however, the existence of such equilibria is not
  guaranteed and in this regard, algorithms that compute equilibria or
  decide their non-existence are important.}

\subsection{Our Contribution}

\revised{We consider \emph{mixed-integer} Nash equilibrium problems
in which each player solves a mixed-integer optimization problem
parameterized in the rivals’ strategies.
Within this broad class of games, we distinguish between standard Nash
equilibrium problems (NEPs), where the parameterization acts only
on the players' cost functions and generalized Nash equilibrium
problems (GNEPs), where, additionally, the strategy spaces of the
players may depend on the rivals' strategies.}

As our main contribution, we introduce the first branch-and-cut (B\&C)
framework \revised{for such mixed-integer games.}  Our framework is
based on exploiting a connection between GNEPs
 and bilevel optimization. We use the well-known reformulation of a GNEP
using the Nikaido--Isoda function and formulate a corresponding
bilevel model. We then adapt the machinery of bilevel optimization
to obtain a tractable relaxation of the problem, the so-called
\emph{high-point relaxation} (HPR), which we further relax by dropping
the integrality constraints, arriving at the \emph{continuous
  high-point relaxation} (C-HPR).
Our  B\&C method then works  as follows.
\begin{enumerate}[label=(\roman*)]
\item\label{enum:high-point} We  solve the node problem, initialized
  with (C-HPR) at the root node in our search tree. In case that the
  node problem is infeasible or if it admits an optimal objective
  value strictly larger than 0, the node does not contain an NE and is
  pruned.
\item Branching Phase:
  Given an optimal solution (with non-positive value), we
  create new nodes as usual by branching on fractional integer
  variables. Once we obtain an integer-feasible node solution, we
  check if it is actually  an NE, and, if so, we stop.
\item Cutting Phase: Otherwise, we shrink the feasible set via
  newly introduced \emph{non-NE-cuts}, i.e., via cuts given by
  inequality constraints that are violated by the current
  integer-feasible node solution but that are fulfilled by any
  NE. After adding the cut, we jump back to
  the first step and solve the augmented (C-HPR).
\end{enumerate}

\rev{Assuming that we can always derive a non-NE-cut,
  we prove the correctness of this method in \Cref{thm:correctness_algBC},
  i.e., we show that if the algorithm terminates, it either outputs
  an equilibrium or gives a certificate of non-existence.}
The key challenge in the above framework is to generate appropriate
non-NE-cuts. \rev{For their construction in the GNEP setting,}
we draw inspiration from the work
of~\textcite{fischetti2018use}, who consider mixed-integer bilevel
optimization problems and apply the concept of intersection cuts (ICs),
originally proposed by~\textcite{balas1971intersection} for integer
programming.
For our framework, we construct non-NE-cuts via intersection cuts, which
require (i) a cone pointed at the node solution containing the set of
all equilibria of the respective sub-tree of the B\&C tree and (ii) a
convex set containing the current incumbent in
its interior but no Nash equilibrium.
\revised{
We construct a set satisfying (ii) under the assumption that the cost
and constraint functions are convex in the rivals’ strategies and that
the constraint function only attains integral values.}
Moreover, we provide sufficient conditions for the existence of a cone
fulfilling (i), e.g., that \rev{both} the sum of all players' cost functions
\revised{and the constraints are} linear.

\rev{For NEPs, it turns out that we can formulate non-NE-cuts
  without the use of ICs but based on best-response inequalities.
  These best-response cuts are only applicable for NEPs and
  do not require any additional assumptions.
  Moreover, they are, roughly speaking, at
  least as strong as ICs for NEPs; see \Cref{lem: ICCutStandNEP}.
  Remark that similar best-response cuts have been used before
  by~\textcite{dragotto2023zero} for a cutting-plane algorithm in
  the pure integer  NEP setting. We embed these cuts in a general B\&C framework,
  which also applies to the class of mixed-integer strategy spaces.
  For these best-response cuts, we show that the B\&C method
  terminates in finite time if the set of best-response sets of
  all players are finite (\Cref{thm: FinitelyManyCutsGeneral}).
  This condition is then shown to be satisfied for the  important special cases of
  (i) the players' cost functions being concave in their own continuous
  strategies and (ii) the players' cost functions only depending on their
  own strategy and the rivals' integer strategy components (\Cref{lem:
    FinitelyManyBRSetsConcave,lem: FinitelyManyBRSetsInteger}).
  Note that class~(i) encompasses, in particular, the mixed extension
  of any NEP. Therefore, our framework is also applicable to the
  computation of mixed NE.}

Finally, we implemented our B\&C method and provide numerical
results in \Cref{sec:numerical-results} \wrt \rev{four} different games:
knapsack games (NEP), generalized knapsack games (GNEP),
implementation games (GNEP), and \rev{NEPs with quadratic objectives.}

\subsection{Related Work}

Continuous and convex GNEPs in which the players' strategy spaces and
cost functions are convex have been studied intensively in terms of
existence theory and numerical algorithms. In this regard,
we refer to the survey articles of \textcite{facchinei2010generalized}
and \textcite{FISCHER2014} for an overview of the general theory and
focus in the following mainly on the nonconvex case.

Previous works on mixed-integer GNEPs and the computation of (pure) NE
are very sparse with the exceptions of
\textcite{Sagratella2017AlgorithmsFG,Sagratella19} and \textcite{Harks24}.
In the former paper \parencite{Sagratella2017AlgorithmsFG}, the author considers
generalized linear potential games, i.e., he assumes jointly constrained GNEPs with
linear coupling constraints and decoupled linear cost functions.
For this class, he shows that a Gauß--Seidel best-response (BR)
algorithm  approximates equilibria within arbitrary precision  using a finite
number of steps in this setting.
A similar best-response method is also used by
\textcite{Fabiani_Grammatico:2020} in an applied context of modeling
multi-vehicle automated driving.
Further proximal-like best-response methods for mixed-integer GNEPs
are studied by \textcite{Fabiani_et_al:2022}.
In \textcite{Sagratella19}, the author considers mixed-integer
GNEPs in which each player's strategy set depends on the other
players' strategies via a linear constraint in her own strategy and
the other players' integrally constrained strategies.
Sagratella proposes a branch-and-bound method based on a general merit
function and on the Nikaido--Isoda function. Moreover, he presents a
branch-and-prune method exploiting the idea of dominance of strategies for
pruning. Under the assumption of strict monotonicity of the
objective's derivative, he provides sufficient conditions for
strategies to be dominated.
Similarly, finite termination for the branch-and-prune method is shown
under the additional assumption that the gradient of the costs in the
continuous variables is strictly monotone.
In his paper, Sagratella also uses the term ``cut'', however, this
``cut'' is not used as a valid inequality to strengthen any relaxation
but rather in a pruning step of the branching algorithm.
\textcite{Harks24} consider general nonconvex GNEPs with
nonconvex strategy spaces and nonconvex cost functions.
Based on a convexification technique using the Nikaido--Isoda function,
they provide a novel characterization of equilibria by associating
with every GNEP instance a set of convexified instances.
They then introduce the class of quasi-linear models,
where a convexified instance exists in which for fixed strategies of
the opponent players, the cost function of every player
is linear and the respective strategy space is polyhedral.
For this class of games, they show that the convexification reduces
the GNEP to a standard (nonlinear) optimization problem.
They provide a numerical study regarding the computation of
equilibria for three classes of quasi-linear GNEPs related to integral
network flows and discrete market equilibria. Their general approach
is limited in the sense that it relies on deriving a convexification,
which itself is known to be computationally demanding. In contrast,
the method presented in this paper circumvents this step and offers a
direct computational approach.

While mixed-integer GNEPs remain relatively unexplored, there is
growing interest in the study of general formulations of nonconvex
 Nash equilibrium problems. A notable development in this
direction is the emergence of integer programming games (IPGs), first
introduced in the seminal work by \textcite{Koeppe11}, where each player's optimization
problem involves minimizing a continuous function over a fixed
polyhedral feasible set with partially integral variables. Since then,
IPGs are the subject of intensive research; see, e.g., \textcite{carvalho2018existence,
  carvalho2022computing,carvalho2023integer,Carvalho21,cronert2022equilibrium,
  PiaFM17,KleerS17,guo2021copositive,kirst2022branch,dragotto2023zero}.
Among these works, \textcite{kirst2022branch,dragotto2023zero} are the
ones most closely related to ours
due to their focus on \emph{pure} NE as well as their adaptation
of classic techniques from mixed-integer
linear optimization such as branching or cutting.
\textcite{kirst2022branch} propose a branch-and-bound algorithm for
computing the set of all approximate equilibria within a specified
error tolerance for IPGs with box-constraints. By exploiting this
special structure, their approach relies on rules that
identify and eliminate regions of the feasible space that cannot
contain any equilibria.
\textcite{dragotto2023zero} address the computation, enumeration, and
selection of Nash equilibria in IPGs with purely integral strategy
spaces using a \revised{cutting-plane algorithm. In contrast to our
framework, their method solves integer problems at
intermediate steps and thus is a multi-tree method
while our method is a single-tree branch-and-cut algorithm that is not
restricted to pure integer games.}

Besides \textcite{kirst2022branch,dragotto2023zero},
\textcite{carvalho2022computing,Carvalho21} also adapt
classic techniques from mixed-integer linear optimization.
\revised{
However, their focus is on computing mixed NE
to circumvent the difficulties arising from the non-convexity of IPGs,
which allows for quite
different algorithm designs as
explained below.
Note that while we focus on pure NE, mixed NE
correspond to pure NE of the mixed extension of a game
and can therefore also be handled within our framework.}
\textcite{Carvalho21} show that, under mild assumptions,
the computation of a mixed NE reduces to computing a pure NE
in a related convexified game, where each player’s
strategy space is replaced by its convex hull.
Consequently, the complexity of finding a mixed NE in
a nonconvex game shifts from dealing with non-convexities
to constructing the convexified game.
Similar to \textcite{Harks24}, this convexification step
is itself computationally demanding.
Rather than separating the construction of the convexified game
from the equilibrium search, \textcite{Carvalho21} propose the
cut-and-play algorithm, which integrates both tasks in one iterative
process.
The algorithm begins with a polyhedral outer approximation of the
convexified game, i.e., a game where players’ strategy spaces are
polyhedral supersets of their convexified counterparts.
It then iteratively refines the approximation by attempting to
compute a mixed NE of it. If no equilibrium exists or the computed
equilibrium is infeasible for the original convexified game,
the strategy spaces are selectively refined.
In contrast to the cuts we employ, these refinements are
player-specific and preserve the invariant that the convex hulls
of the original strategy sets remain feasible.

In contrast to \textcite{Carvalho21}, \textcite{carvalho2022computing}
propose an inner approximation algorithm through
their sampled generation method (SGM).
Inspired by column generation in mixed-integer linear optimization,
their algorithm begins with a finite subgame of the original IPG and
iteratively computes mixed NE. In each iteration, each player’s
strategy in the current mixed NE is compared to their best response
in the full game. If the latter yields a higher payoff,
it is added to the subgame. This process iteratively refines the approximation
of the original game, continuing until a mixed NE of the full IPG is found.
The authors emphasize that the effectiveness of SGM critically relies
on the efficient computation of mixed NE in the subgames.
Since these subgames are finite normal-form games, the existence of
mixed NE is guaranteed, and they can be computed relatively efficiently.
In contrast, the existence and computation of pure NE in nonconvex IPGs
are significantly more challenging, making an adaptation of SGM
to the pure-strategy setting appear impractical.

Besides IPGs, let us also mention the class of nonconvex NEPs
considered by \textcite{Sagratella16,schwarze2023branch}
in which each player's strategy set is given by a convex restriction
function combined with integrality constraints for all strategy
components.
Under the additional assumption that the cost function of
every player is convex in her own strategy and continuous in the
complete strategy profile, \textcite{Sagratella16} proposes a
branching method to compute the entire set of NE.
\rev{The method from the latter paper has later been used for the
  special case of so-called 2-groups partitionable games in
  \textcite{Sagratella:2017}.}
By enhancing this method via a pruning procedure,
\textcite{schwarze2023branch} are able to drop the assumption of
players' cost functions to be convex. Similar to
\textcite{Sagratella19}, their proposed pruning procedure exploits
dominance arguments based on the derivative of the cost function.

\revised{Branch-and-prune techniques (B\&P) as discussed by
  \textcite{Sagratella16,Sagratella19,schwarze2023branch} are closely
  related to branch-and-cut methods. We compare our B\&C method to the
  B\&P method by \textcite{schwarze2023branch} in our numerical study
  and brief\/ly outline in the following the differences between the two methods.
  First of all, the goal of the two methods is different:
  B\&P looks for all NE while our B\&C stops after finding
  the first NE.
  Second, B\&P relies on pruning nodes of the branching tree,
  while B\&C improves the relaxation of the node problems via cuts.
  Third, the typical assumptions are not the same.
  B\&P methods are essentially based on some kind of strict monotonicity
  of the costs, while our B\&C requires other assumptions for ensuring
  correctness and finite termination, such as linear constraints,
  concavity of the sum of costs, etc.\ so that ICs can be derived.
  Fourth and finally, the computational costs depend on the core
  mechanism. For the B\&P of \textcite{schwarze2023branch}, checking
  the dominance criterion is typically cheap. In our B\&C,
  deriving an IC for GNEPs is more expensive, but deriving a
  best-response cut for NEPs is again cheap.}


Finally, let us also mention the most relevant works from
mixed-integer bilevel optimization, which serves as the basis for our
solution approach.
This field started with the seminal paper by
\textcite{Moore_Bard:1990}, in which the first branch-and-bound method
for mixed-integer bilevel optimization is discussed.
The first branch-and-cut method is due to
\textcite{DeNegre_Ralphs:2009}, which motivated many more recent
contributions in this field such as
\textcite{Fischetti-et-al:2017,fischetti2018use}.
For a recent survey on mixed-integer optimization techniques in bilevel
optimization see \textcite{Kleinert_et_al:2021}.


\section{Problem Statement}
\label{sec:problem-statement}

We are considering a non-cooperative and complete-information game~$G$
with players indexed by the set~$N = \set{1, \dotsc,n}$.
Each player~$i \in N$ solves the optimization problem
\begin{equation}
  \label{opt: player}
  \tag{$\mathcal{P}_i(x_{-i})$}
  \begin{split}
    \min_{x_i} \quad & \pi_i(x_i,x_{-i}) \\
    \st  \quad & x_i \in X_i(x_{-i}) \subseteq  \Z^{k_i} \times \R^{l_i},
  \end{split}
\end{equation}
where $x_i$ is the strategy of player~$i$ and $x_{-i}$ denotes the
vector of strategies of all players except player~$i$.
The function $\pi_i:  \prod_{i \in N}  \R^{k_i+ l_i} \to \R$ denotes
the cost function of player $i$.
The strategy set $X_i(x_{-i})$ of player~$i$ depends on the rivals'
strategies $x_{-i}$ and is a subset of $\Z^{k_i} \times \R^{l_i}$ for
$k_i,l_i \in \nonNegInts$, i.e., the first $k_i$ strategy components
are integral and the remaining $l_i$ are continuous variables.
We assume that the strategy sets are of the form
\begin{align*}
  X_i(x_{-i}) \define\Defset{x_i \in \Z^{k_i} \times
  \R^{l_i}}{g_i(x_i,x_{-i}) \leq 0}
\end{align*}
for a function $g_i: \prod_{j \in N}  \R^{k_j+ l_j} \to \R^{m_i}$ and
$m_i \in \nonNegInts$.
We denote by $X(x) \define \prod_{i \in N} X_i(x_{-i})$ the product set
of feasible strategies \wrt $x$ and by
\begin{align*}
  \FeasStrats\define\Defset{x \in \prod_{i \in N} \Z^{k_i} \times
  \R^{l_i}}{x \in X(x)} = \Defset{x \in \prod_{i \in N} \Z^{k_i}
  \times  \R^{l_i}}{g(x) \leq 0}
\end{align*}
the set of feasible strategy profiles, where we abbreviate $g(x)
\define (g_i(x))_{i \in N}$.
We also use its continuous relaxation defined by
\begin{align*}
  \FeasStratsRel\define \Defset{ x \in \prod_{i \in N}\R^{k_i +
  l_i}}{g(x) \leq 0}.
\end{align*}

In order to guarantee that \eqref{opt: player} and our B\&C node
problems admit an optimal solution (if feasible),
we make the following standing assumption.
\begin{assumption}\label{ass:GeneralAss} \hfill
  \begin{enumerate}[label=(\roman*)]
  \item $\FeasStrats$ is non-empty and
    $\FeasStratsRel$  is compact. \label[thmpart]{ass:GeneralAss:Poly}
  \item For every player $i$, her cost function is bounded
    and its extension to $\FeasStratsRel$ is lower semi-continuous.
  \end{enumerate}
\end{assumption}

As usual, a strategy profile~$x^* \in \FeasStrats$ is called a (pure)
Nash equilibrium (NE) if for all players~$i$, the strategy~$x^*_i$
satisfies
\begin{equation*}
  \pi_i(x_i^*,x_{-i}^*) \leq \pi_i(y_i,x^*_{-i})
  \quad \text{ for all } y_i \in X_i(x_{-i}^*),
\end{equation*}
which means that~$x^*_i$ is a strategy minimizing the cost of
player~$i$ parameterized by~$x^*_{-i}$ for all players -- such strategies
are called \emph{best responses}.

We use the following notation throughout the paper.
We denote the set of all NE by $\Nes
\subset \FeasStrats$ and the tuples~$(x,\pi(x))$ of an NE $x$ and its
corresponding costs, which equals the corresponding best response values,
$\pi(x)\define (\pi_i(x))_{i\in N}$ by $\Nespi\define \defset{(x,
  \pi(x)) \in \FeasStrats \times \R^N}{x \in\Nes}$ \rev{and refer to those tuples as equilibrium tuples.}

We denote by $x_{i}^\inte \define (x_{i,1}, \dotsc, x_{i,k_i})$ the
integer components of player $i$'s strategy and analogously by
$x_i^\con \define (x_{i,k_i+1}, \dotsc, x_{i,k_i+ l_i})$ the
continuous variables.
By $\FeasStrats[i]\define \defset{x_i}{\exists x_{-i} \text{ with }
  (x_i,x_{-i}) \in \FeasStrats}$ we refer to the  projection of
$\FeasStrats$ to the strategy space of
player $i$.
The projections to the integer, respectively continuous,
components are denoted by $\FeasStratsInt[i] \define
\defset{x_i^\inte}{x_i \in \FeasStrats[i]}$ and $\FeasStratsCon[i]
\define \defset{x_i^\con}{x_i \in \FeasStrats[i]}$.
We use the analogue notation for the entire and partial strategy
profiles~$x$ and~$x_{-i}$, e.g., we abbreviate $x_{-i}^\inte \define
(x_j^\inte)_{j\neq i}$ and $\FeasStratsInt[-i]\define \prod_{j\neq
  i}\FeasStratsInt[j]$.


\section{The Algorithm}
\label{sec:algorithm}

We now derive a branch-and-cut (B\&C) algorithm to solve the GNEP
defined in \Cref{sec:problem-statement}.
Based on the Nikaido--Isoda (NI) function,
we formulate the search for an NE~$x \in \Nes$
as a mixed-integer linear bilevel problem.
The NI function is given by
\begin{equation*}
  \Psi(x,y) =
  \sum_{i \in N} \pi_i(x) - \sum_{i \in N} \pi_i(y_i,x_{-i})
\end{equation*}
and we further define
\begin{equation*}
  \hat{V}(x) = \max_{y \in X(x)} \ \Psi(x,y).
\end{equation*}
Here, the $y$-variables are maximized so that $y_i$ is a best response
to $x_{-i}$.
Thus, the $\hat{V}$-function computes the aggregated
regrets of players w.r.t.\ the current strategy profile $x$.
It has the well-known property that for all $x \in \FeasStrats$, the inequality
$\hat{V}(x)\geq 0$ holds, and, that $\hat{V}(x) = 0$ is equivalent to
$x$ being an NE, because, in this case, all strategies~$x_i$ are
best-responses; see, e.g., \textcite{facchinei2010generalized}.

Consequently, we are looking for a global minimizer~$x$ of the
$\hat{V}$-function, i.e., we want to solve the problem
\begin{equation*}
  \min_{x \in \FeasStrats} \ \hat{V}(x)
  = \min_{x \in \FeasStrats} \Set{ \max_{y \in X(x)} \  \sum_{i \in N} \pi_i(x)
    - \sum_{i \in N} \pi_i(y_i,x_{-i}) }.
\end{equation*}

By recognizing that we are interested in the sum of best responses and
not in the $y$-variables themselves, we formulate this minimization
problem as a bilevel problem with no $y$-variables appearing
explicitly.
This leads to the epigraph reformulation
\begin{equation}
  \tag{$\mathcal{R}$}
  \label{model:R}
  \begin{split}
    \min_{x \in \FeasStrats,\proxy \in \R^N} \quad
    & \sum_{i \in N} \pi_i(x) - \proxy_i \\
    \st  \quad
    & \proxy \leq \VBR(x),
  \end{split}
\end{equation}
where  $\VBR(x)$ denotes the vector of best response values, i.e.,
$\VBR(x) \define (\Phi_i(x_{-i}))_{i \in N}$ with
\begin{equation*}
  \Phi_i(x_{-i})
  \define \min_{y_i \in X_i(x_{-i})} \ \pi_i(y_i, x_{-i}).
\end{equation*}
Seen as a bilevel problem, $\sum_{i \in N}\Phi_i(x_{-i})$ is the
optimal-value function of the lower-level problem (in~$y$).

The corresponding high-point relaxation (HPR) of model \eqref{model:R}
is then given by
\begin{equation}
  \tag{HPR}
  \label{model:HPR}
  \begin{split}
    \min_{x \in \FeasStrats, \proxy \in \R^N} \quad
    & \sum_{i \in N} \pi_i(x) - \proxy_i \\
    \st \quad
    &\proxy\leq \proxy^+,
  \end{split}
\end{equation}
where we use $\proxy^+ \in \R^N$ as a valid and finite upper-bound
vector for $\proxy$ to ensure boundedness of the problem.
An example of a valid upper bound is given by
\begin{equation*}
  \proxy_i^+ = \sup_{x \in \FeasStrats[i]\times\FeasStrats[-i] } \ \pi_i(x),
\end{equation*}
where we remark that this value is finite by the
assumption that $\pi_i$ is bounded.

Using the continuous relaxation $\FeasStratsRel$ of $\FeasStrats$ in
\eqref{model:HPR} then leads to the continuous high-point relaxation
\begin{equation}
  \tag{C-HPR}
  \label{model:CHPR}
  \begin{split}
    \min_{x \in \FeasStratsRel,\proxy \in \R^N} \quad
    & \sum_{i \in N} \pi_i(x) - \proxy_i \\
    \st \quad & \proxy \leq \proxy^+.
  \end{split}
\end{equation}

In a B\&C tree, each node problem contains the root node problem,
\eqref{model:CHPR} in our case, with additional constraints.
These constraints correspond to the branching constraints and cuts
added along the path from the root node to node~$t$, denoted by $B_t$
and $C_t$, respectively.
Hence, the problem at node~$t$ can be formulated as
\begin{equation}
  \tag{$\mathcal{R}_t$}
  \label{model:node_t}
  \begin{split}
    \min_{x,\proxy} \quad
    & \sum_{i \in N} \pi_i(x) - \proxy_i
    \\
    \st \quad
    & \proxy \leq \proxy^+,
    \\
    & (x,\proxy) \in (\FeasStratsRel \times \R^N)
    \cap B_t \cap C_t \enifed \FeasRt.
  \end{split}
\end{equation}

We now have to discuss the required cuts in more detail and start with
the following definition.

\begin{definition}
  \label{def:non-ne-cut}
  For any node~$t$ of the search tree, let $(x^*, \proxy^*)$ be an
  integer-feasible node solution, i.e., an integer-feasible solution
  to \eqref{model:node_t}, with a non-positive objective value and
  let $x^*$ not be an NE.
  Consider further an arbitrary corresponding best response

  \begin{equation*}
    y^* \in \argmin_{y \in X(x)} \ \sum_{i \in N}\pi_i(y_i,x_{-i}^*).
  \end{equation*}
  Then, we call an inequality $c(x,\proxy;x^*,\proxy^*,y^*)\leq 0$,
  which is parameterized by $(x^*, \proxy^*, y^*)$, a
  \emph{non-NE-cut} (for node $t$) if the following two properties are
  satisfied:
  \begin{enumerate}[label=(\roman*)]
  \item\label{def:non-ne-cut:1} It is satisfied by all points
    $(x, \pi(x)) \in \Nespi \cap B_t \cap C_t$.
  \item\label{def:non-ne-cut:2} It is violated by $(x^*, \proxy^*)$.
  \end{enumerate}
\end{definition}

In addition, a non-NE-cut is said to be globally valid if it is
satisfied by all points $(x, \pi(x)) \in \Nespi$.
Such a cut is then valid for any node~$t$ of the B\&C search tree.

The B\&C method now works as follows.
Starting from the root node, we solve the current node problem
\eqref{model:node_t}. In case that the problem is infeasible or admits
an objective value strictly larger than zero, there does not exist an
equilibrium in this node and we prune it. Otherwise, we check the
optimal node solution for integer-feasibility and create new nodes as
usual by branching on fractional integer variables if necessary. Once
we obtain an integer-feasible node solution, we check if it actually
is an NE. If so, we stop and return the NE. Otherwise, we  use
a non-NE-cut  to cut the integer-feasible point
without removing any potential NE contained in the node.
The procedure to process a node $t$ is described formally in \Cref{algorithm:BC}.

\begin{algorithm}
  \caption{Processing Node~$t$}
  \label{algorithm:BC}
  \begin{algorithmic}[1]
    \State Solve \eqref{model:node_t}. \label{line:step_1}
    \If{\eqref{model:node_t} is infeasible \textbf{or} the optimal objective is strictly positive} \label{line: Infeasible}
    \State Prune the node.
    \Else
    \State Let $(x^*,\eta^*)$ be a solution of \eqref{model:node_t}. \label{line: NodeSol}
    \If{$x^* \notin \FeasStrats$} \label{line: xNotInX}
    \State Create two child nodes by branching on a fractional variable.
    \Else
    \State Solve $\Phi(x^*)$ to obtain a solution~$y^*$. \label{line: SolveBR}
    \If{$\Psi(x^*,y^*) = 0$} \label{line: NEFound}
    \Comment{$x^*$ is an NE}
    \State Return $x^*$ and stop the overall B\&C method.\label{line: return-x-star}
    \Else \Comment{$\eta^* \nleq \Phi(x^*)$ \& $x^*$ is not an NE}  \label{line:
    	proxy>BR}
    \State Augment $C_t$ with a non-NE-cut
    $c(x,\proxy;x^*,\eta^*,y^*)\leq 0$.\label{line: Cut}
    \State Go to Step~\ref{line:step_1}.
    \EndIf
    \EndIf
    \EndIf
  \end{algorithmic}
\end{algorithm}%
%
%
%
Note that as long as the introduced cuts result in closed sets~$C_t$,
the solution in Line~\ref{line: NodeSol} always exists as $\pi_i$
is assumed to be lower semi-continuous on $\FeasStratsRel$ for all $ i
\in N$ and the feasible set $\FeasRt$ of \eqref{model:node_t} is
compact by~$C_t$ and~$B_t$ being closed and $\FeasStratsRel$ being
compact.
See Appendix~\ref{appendix:BC-example} for an exemplary application of
the resulting B\&C method when Algorithm~\ref{algorithm:BC} is used to
process the nodes.

\begin{remark}
  \label{rem: AggCuts1}
  It is also possible to only introduce a single epigraph
  variable~$\proxyagg$ instead of a vector with one epigraph variable
  for each player~$i \in N$.
  This would result in the following alternative formulation:
  \begin{equation}
    \tag{$\mathcal{R}'$}
    \label{model:RPlayer}
    \begin{split}
      \min_{x \in \FeasStrats, \proxyagg \in \R} \quad
      & \left( \sum_{i \in N} \pi_i(x) \right) - \proxyagg
      \\
      \st \quad
      & \proxyagg \leq \sum_{i \in N}\VBR_i(x_{-i}) .
    \end{split}
  \end{equation}
  \Cref{algorithm:BC} can be adapted to this setting by changing
  \ref{def:non-ne-cut:1} in \Cref{def:non-ne-cut} to
  \begin{enumerate}[label=(\roman*')]
  \item\label{rem: AggCuts1:non-NE-cutCon1}
    It is satisfied by all tuples $(x,\sum_{i \in N}\pi_i(x)) \in
    (\Nes \times \R) \cap C_t\cap B_t$.
  \end{enumerate}
  We will later see that the disaggregated version has favorable
  properties (see, e.g., \Cref{rem: StandNECuts}), which is why we focus
  on the latter in what follows.
\end{remark}

\revised{
  We now come to the correctness of the B\&C method.

  \begin{theorem}\label{thm:correctness_algBC}
    The B\&C algorithm is correct, i.e., if it terminates,
    it yields an equilibrium $x^*\in \Nes$ or a certificate for the
    non-existence of NE.
  \end{theorem}
  \begin{proof}
    Assume that the B\&C algorithm terminates, i.e.,
    either
    Algorithm~\ref{algorithm:BC} returns a point $x^*$ in
    Line~\ref{line: return-x-star} or every node was ultimately pruned.
    \begin{description}[leftmargin=10pt]
    \item[Returned $x^*$]
      We argue that $x^*$ is an NE.
      If \Cref{algorithm:BC} reaches Line~\ref{line: return-x-star},
      it holds that $x^* \in \FeasStrats$ and
      \begin{equation*}
        \hat{V}(x^*)
        = \sum_{i \in N} \pi_i(x^*)
        - \min_{y \in X(x)} \ \sum_{i \in N} \pi_i(y_i,x^*_{-i})
        = \Psi(x^*,y^*) = 0.
      \end{equation*}
      Here, the first equality is valid due to the definition of $\hat{V}$
      and the second one holds due to the optimality of $y^*$. Hence, we have
      $\hat{V}(x^*) = 0$ and $x^* \in \FeasStrats$, which is equivalent to
      $x^*$ being an NE.

    \item[Every node pruned]
      We start with two observations:
      First, if the optimal objective value of \eqref{model:node_t} is
      strictly positive, then $\Nespi \cap \FeasRt = \emptyset$
      holds. This is because the objective value of $(x, \pi(x))$ for
      \eqref{model:node_t} and any NE~$x \in \Nes$ is equal to zero.

      The second observation concerns an invariant of the algorithm:
      Due to Condition~\ref{def:non-ne-cut:1} in  \Cref{def:non-ne-cut}, the
      following invariant is true throughout the execution of the B\&C
      algorithm.
      The set of equilibria with corresponding best-response values
      $\Nespi= \Defset{(x,\pi(x)) \in (\FeasStrats \times \R^N)}{x \in
        \mathcal{E}}$ is contained in the union of the feasible
      sets of the problems~\eqref{model:node_t} over all leaf nodes $t$ in
      the B\&C tree, i.e.,
      \begin{align*}
        \Nespi
        \subseteq  \bigcup_{t \text{ is a leaf}} \FeasRt.
      \end{align*}
      Note that pruned nodes are leafs of the B\&C tree as well.

      With these observations, we argue now that in this case, no NE can exist if
      the B\&C method terminates due to every node being ultimately pruned.
      To do so, we show in the following that $\FeasRt\cap \Nespi = \emptyset$
      holds for all leafs $t$ in the case of the B\&C algorithm
      terminating without finding an NE.
      It then follows directly by the above invariant that there does not
      exist any NE.
      If
      every node~$t$ was ultimately pruned, then the condition in
      Line~\ref{line: Infeasible} was met and Problem~\eqref{model:node_t}
      became either infeasible or had a strictly positive optimal
      objective value.
      In the former case, it is clear that $\FeasRt\cap \Nespi =
      \emptyset$ holds because of $\FeasRt = \emptyset$.
      In the latter case, the claim is a direct consequence
      of our first observation.
      Thus, the proof is finished.\qedhere
    \end{description}
  \end{proof}
}

A variant of Algorithm \ref{algorithm:BC}, where cuts are derived
before having an integer-feasible solution satisfies the same
correctness result. However, preliminary numerical experiments showed
that this variant of the overall method is less efficient.

So far, we have shown the correctness of our B\&C method for arbitrary
non-NE-cuts. Clearly,
whether or not the B\&C method terminates after
a finite number of steps depends on the utilized non-NE-cuts.
In this regard, we propose in the following section several cuts and
give sufficient conditions under which they lead to a finite
termination of the B\&C method.


\section{Cuts and Finite Termination}
\label{sec:cuts-finite-termination}

\rev{In this section, we investigate the existence of non-NE-cuts
  as well as provide sufficient conditions for finite termination of
  our B\&C method.
  We require the following useful observation
  regarding \Cref{algorithm:BC}.
  Whenever a cut needs to be added, at least one entry of the proxy
  variable $\proxy$ must exhibit slack with respect to the
  best-response value.
  This insight is crucial to show the existence of a proper
  cut, tightening the respective proxy variable.}

\revised{
  \begin{lemma}\label{lem: proxy>BR}
    Suppose that the algorithm enters the else-part in Line~\ref{line:
      proxy>BR}.
    Then, there exists an $i \in N$ with $\proxy^*_i > \VBR_i(x_{-i}^*)$.
  \end{lemma}
  \begin{proof}
    Since the condition in Line~\ref{line: Infeasible} was not met,
    the objective value of $(x^*,\proxy^*)$ is non-positive.
    Moreover, by the condition in Line~\ref{line: NEFound}   not being fulfilled,
    we know that $\NI(x^*,y^*) \neq 0$. Since $y^*$ is a best response to $x^*$,
    it follows that $\NI(x^*,y^*) > 0$. With these two observations, we now get
    \begin{align*}
      \sum_{i\in N} \pi_i(x^*)-\VBR_i(x^*_{-i}) = \NI(x^*,y^*) > 0
      \geq \sum_{i\in N} \pi_i(x^*)-\proxy_i^* .
    \end{align*}
    Subtracting $\sum_{i\in N} \pi_i(x^*)$ from both sides and
    multiplying with $-1$ yields
    \begin{align*}
      \sum_{i\in N} \proxy_i^*> \sum_{i\in N}  \VBR_i(x^*_{-i}),
    \end{align*}
    from which the statement follows immediately.
  \end{proof}
}

\rev{The remaining section is split into two subsections. First, we consider the
  case of standard NEPs where we derive non-NE-cuts via best-response cuts.
  For these cuts, we then provide sufficient conditions for the finite termination
  of our B\&C method.
  Afterward, we tackle the GNEP setting and provide suitable
  assumptions under which we can guarantee the existence of non-NE-cuts via
  intersection cuts.}

\subsection{Standard Nash Equilibrium Problems}
\label{sec:cuts-ne}

In this section, we consider the special case of $G$ being a standard
NEP, i.e., $X_i(x_{-i}) \equiv X_i$ for some fixed strategy set $X_i$
given by $X_i \define \defset{x_i \in \Z^{k_i}\times \R^{l_i}}{g_i(x_i)
  \leq 0}$.
Note that the set of feasible strategy profiles is then
given by $\FeasStrats= \prod_{i \in N}X_i$.

\begin{lemma}\label{lem: Cut}
  In the situation of \Cref{def:non-ne-cut}, the best-response cut given by
  \begin{align}
    \label{eq: PlayerCut}
    c(x,\proxy;
    x^*, \proxy^*, y^*) \define c_i(x,\proxy;y^*) \define \proxy_i -
    \pi_i(y_i^*,{x}_{-i}) \leq 0
  \end{align}
  yields a non-NE-cut for every  $i \in N(x^*,\proxy^*)\define
  \Defset{i \in N}{\proxy^*_i > \VBR_i(x_{-i}^*)}$.
 \revised{Note that $N(x^*,\proxy^*) \neq \emptyset$ holds in the situation of
 	Line~\ref{line: proxy>BR} by \Cref{lem: proxy>BR}, i.e., we can
 	always use a non-NE-cut of the form \eqref{eq: PlayerCut}.}
  Moreover, these cuts are globally valid, i.e., they are
  satisfied for all $(x,\pi(x)) \in \Nespi$.
\end{lemma}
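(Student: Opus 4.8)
The plan is to check, for each fixed $i \in N(x^*,\proxy^*)$, the two defining properties of a non-NE-cut in \Cref{def:non-ne-cut}, and in fact to establish the stronger global-validity claim directly, which entails property~\ref{def:non-ne-cut:1} at once since $\Nespi \cap B_t \cap C_t \subseteq \Nespi$. The one structural fact I would record first is that in a standard NEP the lower-level problem defining $y^*$ decouples across players: as $X(x^*) = \prod_{j\in N}X_j$ with each $X_j$ independent of the rival strategies, we have
\begin{equation*}
  \min_{y\in X(x^*)}\sum_{j\in N}\pi_j(y_j,x^*_{-j})
  = \sum_{j\in N}\ \min_{y_j\in X_j}\pi_j(y_j,x^*_{-j})
  = \sum_{j\in N}\VBR_j(x^*_{-j}),
\end{equation*}
so every minimizer $y^*$ satisfies $\pi_i(y_i^*,x^*_{-i}) = \VBR_i(x^*_{-i})$ and, in particular, $y_i^*\in X_i$.

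Given this, property~\ref{def:non-ne-cut:2} is immediate: evaluating the cut at $(x^*,\proxy^*)$ gives $c_i(x^*,\proxy^*;y^*) = \proxy^*_i - \pi_i(y_i^*,x^*_{-i}) = \proxy^*_i - \VBR_i(x^*_{-i})$, which is strictly positive exactly by the definition of the index set $N(x^*,\proxy^*)$. (One may also observe, though it is not needed for the statement, that this index set is nonempty whenever \Cref{algorithm:BC} reaches the cutting step: a non-positive objective value yields $\sum_{i}\proxy^*_i \geq \sum_i \pi_i(x^*)$, while $x^*\in\FeasStrats$ not being an NE yields $\hat{V}(x^*) = \sum_i\pi_i(x^*) - \sum_i\VBR_i(x^*_{-i}) > 0$; combining the two forces $\proxy^*_k > \VBR_k(x^*_{-k})$ for at least one $k$.)

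For global validity, hence property~\ref{def:non-ne-cut:1}, take any $(x,\pi(x))\in\Nespi$, i.e., any NE $x$. Substituting into the cut gives $c_i(x,\pi(x);y^*) = \pi_i(x) - \pi_i(y_i^*,x_{-i})$, so it remains to show $\pi_i(x_i,x_{-i})\leq\pi_i(y_i^*,x_{-i})$. This is precisely the best-response inequality satisfied by player~$i$ at the equilibrium $x$, applied to the deviation $y_i^*$, which is admissible because $y_i^*\in X_i = X_i(x_{-i})$ by the decoupling above. This last line is the only place where the standing assumption $X_i(x_{-i})\equiv X_i$ of a standard NEP is used, and it is the step I would flag as the conceptual crux: in a genuine GNEP one generally has $X_i(x_{-i})\neq X_i(x^*_{-i})$, so $y_i^*$ need not be a feasible deviation for player~$i$ at $x$ and the argument breaks down — which is exactly why the GNEP case in \Cref{sec:cuts-gne} calls for a different (intersection-cut) construction. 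Assembling the three verifications completes the proof; it involves only the bookkeeping of memberships in the strategy sets and no nontrivial estimates.
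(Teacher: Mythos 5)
Your proof is correct and follows essentially the same route as the paper's: global validity via the best-response inequality at any equilibrium (using that $y_i^*\in X_i$ is a feasible deviation in the standard NEP setting), and violation at $(x^*,\proxy^*)$ via $\proxy_i^*>\VBR_i(x^*_{-i})=\pi_i(y_i^*,x^*_{-i})$. The only difference is that you make explicit the decoupling of the lower-level problem that justifies $\pi_i(y_i^*,x^*_{-i})=\VBR_i(x^*_{-i})$, which the paper uses implicitly in the same chain of equalities.
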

\begin{proof}
  For an arbitrary tuple $(\bar{x},\pi(\bar{x})) \in \Nespi$,
  the equilibrium condition implies $\pi(\bar{x}) =
  \VBR(\bar{x})$ and, hence, for any $i \in N$ we have
  \begin{equation*}
    \pi_i(\bar{x})
    = \argmin_{y_i \in X_i} \pi_i(y_i,\bar{x}_{-i}) \leq
    \pi_i(y_i^*,\bar{x}_{-i}).
  \end{equation*}
  Thus, $c_i(\bar{x},\pi(\bar{x});y^*) \leq 0$
  holds for all $i \in N\supseteq N(x^*,\proxy^*)$.
  In particular, the cuts in~\eqref{eq: PlayerCut} fulfill
  Condition~\ref{def:non-ne-cut:1} of \Cref{def:non-ne-cut}.

  For Condition~\ref{def:non-ne-cut:2}, we get as an immediate
  consequence of $i \in N(x^*,\proxy^*)$ the inequality
  \begin{equation*}
    \proxy_i^* >
    \VBR_i(x^*_{-i}) = \min_{y_i \in X_i}\pi_i(y_i,x_{-i}^*) =
    \pi_i(y_i^*,x_{-i}^*).
    \qedhere
  \end{equation*}
\end{proof}

\begin{remark}\label{rem: StandNECuts}
	\begin{thmparts}
		\item   The first part of the proof above does not only work for
		\mbox{$(\bar{x},\pi(\bar{x})) \in \Nespi$} but for all $(x,\VBR(x))
		\in \FeasStrats\times \R^N$.
		Hence, if one only uses the cuts in \Cref{lem: Cut}, then
		\begin{align*}
			\Defset{(x,\VBR(x)) \in \FeasStrats\times \R^N}{\exists \proxy \in
				\R^N \text{ with } (x,\proxy) \in F_t } \subseteq F_t
		\end{align*}
		holds.
		In particular, an integer-feasible node solution~$(x^*,\proxy^*)$
		with $\proxy^*\leq \VBR(x^*)$ always satisfies $\proxy^* =
		\VBR(x^*)$.

		\item  For the situation considered in \Cref{rem: AggCuts1}, it follows
		analogously to \Cref{lem: Cut} that a non-NE-cut is induced by an
		aggregated version of the functions defined in \Cref{lem: Cut},
		i.e., by the function
		\begin{equation*}
			c(x,\proxyagg;y^*) \define \proxyagg
			- \sum_{i \in N}\pi_i(y_i^*,{x}_{-i}).
		\end{equation*}
		Note that these aggregated cuts are weaker than the individual
		ones introduced in \Cref{lem: Cut} in the sense that the aggregation
		of the functions in \Cref{lem: Cut} do not induce a non-NE-cut in
		general in our setting, i.e., the function
		\begin{align}
			\label{eq: AggCutOurSetting}
			c(x,\proxy;y^*) \define
			\sum_{i \in N}\proxy_i - \sum_{i \in N}\pi_i(y_i^*,{x}_{-i})
		\end{align}
		does not necessarily introduce a non-NE-cut as it does not fulfill
		Condition~\ref{def:non-ne-cut:2} in \Cref{def:non-ne-cut} in
		general.

		\item \textcite{dragotto2023zero} introduce cuts similar to the ones
		proposed in \Cref{lem: Cut}.
		They consider so-called integer-programming games (IPGs), i.e.,
		standard NEPs as discussed in this section with the additional
		properties of all strategies being integer and $g_i$ being linear.
		In order to solve such an IPG, the authors derive a cutting-plane
		algorithm in which the space of strategy profiles is reduced via
		cuts of the form
		\begin{equation*}
			c_i(x;y^*) \define \pi_i(x) - \pi_i(y_i^*,x_{-i})\leq 0,
			\quad x \in \FeasStrats,
		\end{equation*}
		for a best response $y^*$ w.r.t.\ $x$ and $i \in N$ with $ \pi_i(x) >
		\pi_i(y_i^*,x_{-i})$.
		Note, however, that in contrast to our approach, the authors (i) do not
		branch and solely add cuts, (ii) only consider standard Nash games,
		and (iii) are restricted to the pure integer setting.

		\item  A no-good cut is an inequality that excludes exactly one integer point from the feasible set.
		In the binary case, a no-good cut cutting off $x^*$ is
		simply given by
		\begin{equation*}
			\sum_{j:x^*_j=0} x_j + \sum_{j:x^*_j=1} (1-x_j) \geq 1.
		\end{equation*}
		It can be extended to the general integer case by using a binary
		expansion of the integer variables.
		In the special case of $X_i \subseteq \mathbb{Z}^{k_i}$ for all $i$,
		simple no-good cuts are trivially non-NE-cuts.
	\end{thmparts}
\end{remark}

In the following, we derive sufficient conditions under which
\Cref{algorithm:BC}
terminates in finite time when using the cuts introduced in \Cref{lem:
  Cut}.
 \rev{Note that this is sufficient for the finite termination of  the overall B\&C method
 as the set of feasible strategy profiles is bounded by our standing
 Assumption~\ref{ass:GeneralAss} and thus only finitely many nodes
 during the execution of our B\&C may appear.}
The following theorem provides an abstract sufficient condition for the finite
termination of \Cref{algorithm:BC}.
In the subsequent lemmas, we show that this condition is fulfilled
for the important two special cases in which
\begin{enumerate}[label=(\roman*)]
\item the players' cost functions are concave in their own continuous
  strategies or
\item the players' cost function only depend on their own strategy and
  the rivals integer strategy components.
\end{enumerate}

In order to state the promised theorem, we introduce the following
terminology.
Let us denote by
\begin{align*}
  \BR(x) \define \argmin \Defset{\sum_{i \in
  N}\pi_i(y_i,x_{-i})}{y \in \prod_{i \in N} X_i}
\end{align*}
   the set of best responses
to $x \in \FeasStrats$.
Moreover, let us define the set of all possible best response sets
by $\BRcom \define \defset{\BR(x)}{x \in \FeasStrats} \subseteq
\powset(\FeasStrats)$, where we denote by $\powset(\FeasStrats)$ the
power set of $\FeasStrats = \prod_{i \in N} X_i$.

\begin{theorem}
  \label{thm: FinitelyManyCutsGeneral}
  Assume that $\abs{\BRcom}$ is finite.
  If we use the non-NE-cut \eqref{eq: PlayerCut} from \Cref{lem: Cut}
  in Line~\ref{line: Cut} of \Cref{algorithm:BC}, then \Cref{algorithm:BC}
  terminates after a finite number of steps.
\end{theorem}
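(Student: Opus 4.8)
The plan is to show that the cutting phase can be triggered only finitely often along any branch of the search tree, which, combined with the already-noted finiteness of the number of nodes (since $\FeasStratsRel$ is bounded, hence only finitely many integer configurations), yields finite termination of the whole method. So fix a node $t$ and suppose, towards a contradiction, that \Cref{algorithm:BC} adds infinitely many cuts of the form~\eqref{eq: PlayerCut} while processing the subtree rooted at $t$ (equivalently, passes through Line~\ref{line: Cut} infinitely often). Each such cut is parameterized by an integer-feasible node solution $(x^*,\proxy^*)$ together with a best response $y^* \in \BR(x^*)$, and it is the inequality $\proxy_i - \pi_i(y_i^*, x_{-i}) \le 0$ for every $i \in N(x^*,\proxy^*)$.

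First I would argue that the relevant data of a cut is determined by the best-response set $\BR(x^*)$, not by $x^*$ itself. Indeed, the function appearing in cut~\eqref{eq: PlayerCut}, namely $x \mapsto \pi_i(y_i^*, x_{-i})$, depends on $(x^*,\proxy^*,y^*)$ only through the chosen representative $y^*$; and the algorithm is free to pick any $y^* \in \BR(x^*)$. Since by hypothesis $\abs{\BRcom}$ is finite, there are only finitely many possible sets $\BR(x^*)$, and hence — fixing a selection rule that maps each best-response set to one representative $y^*$ — only finitely many distinct cut functions $c_i(\,\cdot\,;y^*)$ can ever be generated. By the pigeonhole principle, if infinitely many cuts are added, some cut function $c_i(\,\cdot\,;\tilde y)$ (for a fixed player $i$ and fixed representative $\tilde y$) is generated at two distinct iterations, say producing node solutions $(x^1,\proxy^1)$ and later $(x^2,\proxy^2)$ with the same $\tilde y = y^1 \in \BR(x^1)$ being a valid best response also for the configuration that produced the second cut in the sense that $x^2$ has best-response set $\BR(x^1)$; more carefully, one shows that once the cut for $\tilde y$ and player $i$ has been added to $C_t$, it is part of the feasible-set description of \emph{every} subsequent node problem in this subtree, so the later solution $(x^2,\proxy^2)$ must satisfy $\proxy^2_i \le \pi_i(\tilde y_i, x^2_{-i})$.

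Now I would derive the contradiction from \Cref{rem: StandNECuts}~a). In the standard NEP setting, once cuts of the form~\eqref{eq: PlayerCut} have been added, every integer-feasible node solution $(x^*,\proxy^*)$ with $\proxy^* \le \VBR(x^*)$ — and such solutions are exactly the ones triggering either termination or a new cut, since at Line~\ref{line: Cut} we have $x^* \in \FeasStrats$ — in fact satisfies $\proxy^* = \VBR(x^*)$. Hence at the second iteration above, $\proxy^2_i = \VBR_i(x^2_{-i}) = \min_{y_i \in X_i}\pi_i(y_i,x^2_{-i}) \le \pi_i(\tilde y_i, x^2_{-i})$, because $\tilde y_i \in X_i$. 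But if $\tilde y = y^1 \in \BR(x^1)$ and $\BR(x^2) = \BR(x^1)$, then $\tilde y \in \BR(x^2)$, so $\pi_i(\tilde y_i, x^2_{-i}) = \VBR_i(x^2_{-i}) = \proxy^2_i$; that is, $i \notin N(x^2,\proxy^2)$, so no cut for player $i$ is generated at the second iteration after all. Running this argument over all finitely many $(i,\tilde y)$ pairs shows that after finitely many cuts, $N(x^*,\proxy^*) = \emptyset$ for every integer-feasible node solution in the subtree, i.e.\ $\proxy^* = \VBR(x^*)$, which by the $\hat V$-characterization means $x^*$ is an NE and the algorithm terminates in Line~\ref{line: return-x-star}. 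This contradicts the assumption of infinitely many cuts.

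The main obstacle is the bookkeeping in the middle step: making precise that "the same best-response data reappears" forces the configuration not merely to repeat a best-response set but to repeat it at a node where the earlier cut is still active, and then checking that activeness of that cut is exactly what kills membership in $N(x^*,\proxy^*)$. A cleaner way to organize this — and the route I would actually write up — is to argue directly by a counting/monotonicity bound: associate to the subtree rooted at $t$ the finite collection $\mathcal{C}$ of all cut functions $c_i(\,\cdot\,;y^*)$ that arise from pairs $(i, S)$ with $i \in N$ and $S \in \BRcom$ (finitely many by hypothesis, using a fixed selection of a representative $y^* \in S$ for each $S$), note that each cut added in Line~\ref{line: Cut} is an element of $\mathcal{C}$ not previously in $C_t$ — because the cut cuts off $(x^*,\proxy^*)$, while by \Cref{rem: StandNECuts}~a) any already-present cut $c_i(\,\cdot\,;y^*) \le 0$ with $y^* \in \BR(x^*)$ is satisfied with equality at $(x^*,\VBR(x^*)) = (x^*,\proxy^*)$ — and conclude that at most $\abs{\mathcal{C}} \le \abs{N}\cdot\abs{\BRcom}$ cuts are ever added before no violated cut remains, at which point the next integer-feasible node solution is an NE. This makes the finiteness of the cutting phase quantitative and avoids the delicate re-encounter argument.
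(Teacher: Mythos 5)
Your argument is in substance the same as the paper's proof: the paper formalizes exactly your ``re-encounter'' observation as a claim stating that if a cut for player $i_{s_1}$ was generated at an iteration with best-response set $\BR(x^*_{s_1})$, then at any later iteration $s_2$ with $\BR(x^*_{s_2})=\BR(x^*_{s_1})$, feasibility of $(x^*_{s_2},\proxy^*_{s_2})$ with respect to the stored cut, combined with the fact that the old best response $y^*_{s_1}$ is still a best response to $x^*_{s_2}$, yields $(\proxy^*_{s_2})_{i_{s_1}}\leq \VBR_{i_{s_1}}((x^*_{s_2})_{-i_{s_1}})$ and hence $i_{s_1}\notin N(x^*_{s_2},\proxy^*_{s_2})$; it then counts at most $n$ iterations per element of $\BRcom$. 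One caution about the ``cleaner'' counting version you say you would actually write up: the pigeonhole must be taken over best-response sets, not over a finite list $\mathcal{C}$ of cut functions built from one fixed representative $y^*$ of each $S\in\BRcom$. Line~\ref{line: SolveBR} returns an arbitrary element of $\BR(x^*)$, and $\BR(x^*)$ may be infinite (e.g., an optimal face in the presence of continuous variables), so the cut actually added need not belong to $\mathcal{C}$ and the bound $\abs{\mathcal{C}}$ does not directly apply. Your first formulation avoids this, because it only uses that the \emph{earlier} representative lies in the \emph{later} best-response set, which is all the paper needs. Finally, the appeals to \Cref{rem: StandNECuts} are unnecessary and slightly misapplied: when a cut is about to be added one has $\proxy^*\nleq\VBR(x^*)$, so the equality $\proxy^*=\VBR(x^*)$ need not hold there; the inequality $\proxy^2_i\leq\pi_i(\tilde{y}_i,x^2_{-i})$ obtained from feasibility with respect to the stored cut is all that is required to conclude $i\notin N(x^2,\proxy^2)$.
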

\begin{proof}
  Consider an arbitrary sequence of iterations of
  \Cref{algorithm:BC} with corresponding optimal solutions
  $(x^*_s,\proxy^*_s), s= 1,\ldots,j$, and computed best
  responses~$y_s^*$, $s= 1,\ldots,j$, in Line~\ref{line: SolveBR}.
  Moreover, let $i_s$ be the index of the player for which $C_t$ was
  augmented with a cut from \Cref{lem: Cut}  in the $s$-th iteration
  for every $ s= 1,\ldots,j$.
  The following claim holds.
  \begin{claim}\label{claim: iOptimalAfterCut}
    For any two iteration indices $s_1<s_2\leq j$ with $\BR(x^*_{s_1})
    = \BR(x^*_{s_2})$,  we have $i_{s_1} \notin
    N(x^*_{s_2},\proxy^*_{s_2})$.
  \end{claim}
  \begin{proofClaim}
    Since $(x^*_{s_2},\proxy^*_{s_2})$ is feasible for
    \eqref{model:node_t}, we particularly have
    $(x^*_{s_2},\proxy^*_{s_2}) \in C_t$ and, hence,
    \begin{align*}
      0 \geq c_{i_{s_1}}(x^*_{s_2},\proxy^*_{s_2};y^*_{s_1})
      & = (\proxy^*_{s_2})_{i_{s_1}} -
        \pi_{i_{s_1}}((y^*_{s_1})_{i_{s_1}},(x^*_{s_2})_{-i_{{s_1}}})
      \\
      & = (\proxy^*_{s_2})_{i_{s_1}} -
        \VBR_{i_{s_1}}((x^*_{s_2})_{-i_{s_1}}),
    \end{align*}
    where the last equality is valid by $y^*_{s_1} \in \BR(x^*_{s_1})
    = \BR(x^*_{s_2})$.
    Thus, $i_{s_1} \notin N(x^*_{s_2},\proxy^*_{s_2})$, which shows
    the claim.
  \end{proofClaim}
  We get as a consequence the following statement.
  \begin{claim}
    For any $\mathcal{B}\in \BRcom$, there may exist at most $n$
    sequence indices $s_1 < \dotsb < s_n < j$ with  $\BR(x^*_{s_l}) =
    \mathcal{B}, l = 1,\ldots,n$.
  \end{claim}
  \begin{proofClaim}
  	\revised{
  		Suppose for the sake of a contradiction that
  		there is a sequence of $n+1$ iterations $s_1 < \dotsb <s_{n+1}$
  		with $\BR(x^*_{s_l}) =
  		\mathcal{B}, l = 1,\ldots,n+1$. Since we have $n$ players,
  		the pigeon-hole principle implies that  we added a best-response cut
  		for a player twice, i.e.~there has to exist $i\in N$ and $l_1,l_2 \leq n+1$ with
  		$i_{s_{l_k}} = i$ and $i \in N(x^*_{s_{l_k}},\proxy^*_{s_{l_k}})$ for both $k = 1,2$.
  		This contradicts \Cref{claim: iOptimalAfterCut}.
  	}
  \end{proofClaim}

  From the above claim, it now follows directly that $j$ cannot be
  arbitrarily large as $\BRcom$ contains only finitely many
  best-response sets $\mathcal{B}$.
\end{proof}

In the following two lemmas, we present two applications for the
above \namecref{thm: FinitelyManyCutsGeneral}.
In \Cref{lem: FinitelyManyBRSetsConcave}, we show that  \Cref{thm:
  FinitelyManyCutsGeneral} is applicable for games in which the
players cost function are concave in their own continuous strategies
and in which all strategy sets are polyhedral.
While concave functions obviously include linear functions, the case
of strictly concave functions appears, e.g., when economies of scale
effects are present, i.e., situations in which marginal costs are
decreasing. Examples include network design games, where cost
functions are modeled by fixed or concave costs; see, e.g.,
\textcite{Anshelevich08,FalkeHarks13}.
\begin{lemma}
  \label{lem: FinitelyManyBRSetsConcave}
  Assume  that for all $i \in N$, player $i$'s
  \begin{enumerate}[label=(\roman*)]
  \item strategy set is given by  $X_i = \defset{x_i\in \Z^{k_i}\times
      \R^{l_i}}{A_ix_i\leq b_i}$ for some matrix~$A_i$ and some
    vector~$b_i$.
  \item cost function is concave in her continuous variables
    $x_i^\con$, i.e., for all \mbox{$x_i^\inte \in \FeasStratsInt[i]$} and
    $x_{-i} \in \FeasStrats[-i]$, the function
    \begin{equation*}
      \pi_i(x_i^\inte,\cdot,x_{-i}):\FeasStratsCon[i] \to \R,
      \quad
      x_i^\con \mapsto \pi_i(x_i^\inte,x_i^\con,x_{-i}),
    \end{equation*}
    is concave.
  \end{enumerate}
  Then, $\abs{\BRcom} < \infty$ holds and the overall B\&C method
  terminates in finite time if the cuts of \Cref{lem: Cut} are used.
\end{lemma}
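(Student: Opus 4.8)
The plan is to invoke \Cref{thm: FinitelyManyCutsGeneral}, so the whole task reduces to showing $\abs{\BRcom} < \infty$. Recall that $\BR(x) = \argmin\{\sum_{i\in N}\pi_i(y_i,x_{-i}) : y \in \prod_i X_i\}$, and since the objective separates across players, $\BR(x) = \prod_{i\in N} \BR_i(x_{-i})$ where $\BR_i(x_{-i}) = \argmin_{y_i \in X_i} \pi_i(y_i, x_{-i})$. Hence it suffices to show that for each player $i$, the collection $\{\BR_i(x_{-i}) : x_{-i} \in \FeasStrats[-i]\}$ is finite; then $\BRcom$ embeds into a finite product of finite sets.

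Fix $i$. First I would decompose over the integer part: $X_i = \bigcup_{z \in \FeasStratsInt[i]} (\{z\} \times X_i^{z})$, where $X_i^{z} = \{x_i^\con : A_i(z,x_i^\con)^\top \le b_i\}$ is a polytope (bounded because $\FeasStratsRel$ is compact by \Cref{ass:GeneralAss:Poly}) in the continuous variables, and there are only finitely many feasible integer parts $z$. For a fixed $z$, the map $x_i^\con \mapsto \pi_i(z,x_i^\con,x_{-i})$ is concave on the polytope $X_i^{z}$ by hypothesis (ii); a concave function attains its minimum over a polytope at a vertex of that polytope. The set $\mathrm{vert}(X_i^{z})$ is finite and does \emph{not} depend on $x_{-i}$ (it depends only on $A_i, b_i, z$). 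Therefore, for every $x_{-i}$, the best response $\BR_i(x_{-i})$ is attained on the finite set $V_i \define \bigcup_{z \in \FeasStratsInt[i]} \big(\{z\}\times \mathrm{vert}(X_i^{z})\big) \subseteq X_i$, i.e.\ $\BR_i(x_{-i}) \cap V_i \neq \emptyset$. To extract finitely many possible \emph{sets}, I would then argue: $\BR_i(x_{-i})$ is a face-like object, but more simply, since a minimizer always lies in $V_i$, I would replace $\BR(x)$ in \Cref{thm: FinitelyManyCutsGeneral} by the selection $\widetilde{\BR}_i(x_{-i}) \define \BR_i(x_{-i}) \cap V_i$, which takes values in the (finite) power set $\powset(V_i)$; so there are at most $2^{\abs{V_i}}$ possible values, and $\abs{\BRcom} \le \prod_{i\in N} 2^{\abs{V_i}} < \infty$.

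The one subtlety — and the main obstacle — is that \Cref{thm: FinitelyManyCutsGeneral} is stated in terms of the literal set $\BR(x)$, whereas the argument above only controls a restriction of it to vertices. I would handle this by observing that \Cref{algorithm:BC} in Line~\ref{line: SolveBR} is free to return \emph{any} optimal $y^*$, so we may assume without loss of generality that the solver always returns a best response lying in $V \define \prod_i V_i$; the cut \eqref{eq: PlayerCut} and the proof of \Cref{thm: FinitelyManyCutsGeneral} only use that $y^*_{s} \in \BR(x^*_{s})$, and \Cref{claim: iOptimalAfterCut} only needs the equality $\BR(x^*_{s_1}) = \BR(x^*_{s_2})$ for the chosen best responses. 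Re-reading the proof of \Cref{thm: FinitelyManyCutsGeneral}, its counting step works verbatim if we replace $\BR$ throughout by the finite-valued selection $\widetilde{\BR}$; thus the cleanest write-up is to phrase the concavity argument as: the set $\{\widetilde{\BR}(x) : x \in \FeasStrats\}$ is finite, where $\widetilde{\BR}(x) = \BR(x)\cap V$, and note that this is exactly the hypothesis \Cref{thm: FinitelyManyCutsGeneral} needs once the solver is instructed to pick vertex best responses. The final sentence then just chains $\abs{\BRcom} < \infty \Rightarrow$ finite termination via \Cref{thm: FinitelyManyCutsGeneral} and the already-noted fact that only finitely many B\&C nodes arise.
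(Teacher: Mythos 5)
Your overall strategy coincides with the paper's: reduce everything to $\abs{\BRcom}<\infty$ via \Cref{thm: FinitelyManyCutsGeneral}, split over the finitely many feasible integer parts, and exploit concavity of the cost over the resulting polytope in the continuous variables. Your additional observation that $\BR(x)$ factors as $\prod_{i\in N}\argmin_{y_i\in X_i}\pi_i(y_i,x_{-i})$ is correct and is a legitimate simplification that the paper does not use (the paper works with the joint minimization directly, decomposing only over $\hat y^\inte\in\FeasStratsInt$).

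There is, however, a genuine gap relative to the statement as written, and you have correctly located it yourself. The fact that a concave function attains its minimum over a polytope at a vertex only controls the selection $\BR_i(x_{-i})\cap V_i$; it does not show that the collection of the \emph{full} argmin sets is finite, so the first conclusion $\abs{\BRcom}<\infty$ is never established. Your repair---replacing $\BR$ by $\widetilde{\BR}\define\BR\cap V$ throughout \Cref{thm: FinitelyManyCutsGeneral} and insisting that the oracle in Line~\ref{line: SolveBR} return vertex best responses---does make the counting go through (in \Cref{claim: iOptimalAfterCut} one only needs $y^*_{s_1}\in\BR(x^*_{s_2})$, which $y^*_{s_1}\in\widetilde{\BR}(x^*_{s_1})=\widetilde{\BR}(x^*_{s_2})$ supplies), but it proves a weaker result: termination only for executions whose best-response solver is restricted to $V$, and no finiteness of $\BRcom$ itself. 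The missing ingredient, which the paper isolates as \Cref{lem: HelpConc}, is that the \emph{entire} set of minimizers of a concave function over a polyhedron is a union of faces of that polyhedron. Since a polyhedron has only finitely many faces, hence finitely many unions of faces, and there are finitely many feasible integer parts, this bounds the number of possible literal sets $\BR(x)$ and yields $\abs{\BRcom}<\infty$ directly---with no modification of \Cref{thm: FinitelyManyCutsGeneral}, no restriction on the solver, and no detour through a vertex selection. Upgrading your vertex argument to this face argument collapses your plan to the paper's proof.
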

\begin{proof}
  Note that for any $x \in \FeasStrats$, we can write
  \begin{equation*}
    \BR(x)
    = \bigcup_{\hat{y}^\inte \in \FeasStratsInt}
    \Set{\hat{y}^\inte} \times \BR(x,\hat{y}^\inte)
  \end{equation*}
  with
  \begin{equation*}
    \BR(x,\hat{y}^\inte) \define
    \Defset{y^\con}{ (\hat{y}^\inte,y^\con) \in \BR(x)}.
  \end{equation*}
  The set $\BR(x,\hat{y}^\inte)$ is either empty if there are no
  $(\hat{y}^\inte,y^\con) \in \BR(x)$ or it corresponds to the set of
  optimal solutions to the optimization problem
  \begin{equation*}
    \min_{y^\con} \quad
    \sum_{i \in N} \pi_i(\hat{y}_i^\inte,y_i^\con,x_{-i})
    \quad \st \quad y^\con \in X[\hat{y}^\inte]
  \end{equation*}
  with
  \begin{equation*}
    X[\hat{y}^\inte] \define
    \Defset{y^\con}{A_i^\con y_i^\con \leq b_i - A_i^\inte
    \hat{y}_i^\inte, \, i \in N}.
  \end{equation*}
  Here, we use the notations $A_i^\con$ and $A_i^\inte$ to denote the
  sub-matrices of $A_i$ that correspond to the continuous and integral
  strategy components, i.e., $A_iy_i = A_i^\con y_i^\con + A_i^\inte
  y_i^\inte$.
  By the concavity assumption on $\pi_i$, we can exploit the fact that
  the set of optimal solutions of the above optimization problem is
  the union of some faces of $X[\hat{y}^\inte]$; see \Cref{lem:
    HelpConc}.
  In particular, $\BR(x,\hat{y}^\inte)$ is (for any $x \in
  \FeasStrats$) the union of some faces of $X[\hat{y}^\inte]$.
  Note that this is particularly true if the set is empty.
  Hence, we get
  \begin{align*}
    \BRcom \subseteq
    \Defset{ \bigcup_{\hat{y}^\inte \in \FeasStratsInt}
    \{\hat{y}^\inte\} \times \Fa(\hat{y}^\inte)}{\Fa(\hat{y}^\inte)
    \text{ is the union of faces of } X[\hat{y}^\inte]}.
  \end{align*}
  The latter set is finite as $\FeasStratsInt$ is finite (by
  $\FeasStratsRel$ being bounded) and
  any polyhedron only has finitely many faces and thus also finitely
  many different unions of them.
\end{proof}

Next, we show in the following \namecref{lem:
  FinitelyManyBRSetsInteger} that \Cref{thm: FinitelyManyCutsGeneral}
is also applicable for games in which  the cost functions of players
only depend on their own strategy and the rivals' integer strategy
components.

\begin{lemma}
  \label{lem: FinitelyManyBRSetsInteger}
  Assume that for all $i \in N$,
  player $i$'s cost function  $\pi_i$ only depends on $x_i$ and the rivals'
  integer strategy components, i.e., there exists a function
  $\pi_i^\inte: X_i\times \FeasStratsInt[-i] \to \R$ such that $\pi_i(x) =
  \pi_i^\inte(x_i,x_{-i}^\inte)$ for all $x \in \FeasStrats$.
  Then, $\abs{\BRcom} < \infty$ holds and the overall B\&C methods
  terminates in finite time if the cuts of \Cref{lem: Cut} are used.
\end{lemma}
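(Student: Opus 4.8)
The plan is to observe that, for a standard NEP, the best-response map $x \mapsto \BR(x)$ factors through the projection of $x$ onto its integer components, and then to combine the finiteness of the set of feasible integer profiles with \Cref{thm: FinitelyManyCutsGeneral}.

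First I would record that, since the strategy sets $X_i$ are fixed throughout this subsection, the feasible region $\prod_{i \in N} X_i$ of the minimization problem defining $\BR(x)$ does not depend on $x$ at all. Next, using the hypothesis, I would rewrite its objective as
\begin{equation*}
  \sum_{i \in N} \pi_i(y_i, x_{-i}) = \sum_{i \in N} \pi_i^\inte(y_i, x_{-i}^\inte),
\end{equation*}
and note that the right-hand side depends on $x$ only through the integer strategy components $x^\inte = (x_j^\inte)_{j \in N}$, since each $x_{-i}^\inte$ is a subvector of $x^\inte$. Hence both the objective and the feasible set of the problem defining $\BR(x)$ depend on $x$ only via $x^\inte$, so $\BR(x) = \BR(x')$ whenever $x^\inte = (x')^\inte$. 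As $x$ ranges over $\FeasStrats$ its integer part ranges over $\FeasStratsInt$, which yields
\begin{equation*}
  \abs{\BRcom} = \abs{\defset{\BR(x)}{x \in \FeasStrats}} \le \abs{\FeasStratsInt}.
\end{equation*}

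It then remains to see that $\FeasStratsInt$ is finite; this is exactly the argument already used in the proof of \Cref{lem: FinitelyManyBRSetsConcave}: $\FeasStratsRel$ is compact by \Cref{ass:GeneralAss:Poly}, hence bounded, so there are only finitely many feasible integer profiles. Consequently $\abs{\BRcom} < \infty$, and \Cref{thm: FinitelyManyCutsGeneral} shows that \Cref{algorithm:BC} processes each node in finitely many steps when the cuts of \Cref{lem: Cut} are used; together with the fact that the B\&C search tree has only finitely many nodes (as recalled at the beginning of this section), the overall B\&C method terminates in finite time. I do not anticipate any genuine obstacle; the only point requiring care is to make precise that the best-response objective depends on the \emph{full} integer profile $x^\inte$ — that is, on the union of the $x_{-i}^\inte$ over all $i \in N$ — and that this profile ranges over the finite set $\FeasStratsInt$.
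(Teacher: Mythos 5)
Your proposal is correct and follows essentially the same route as the paper's proof: both arguments observe that the best-response map $\BR(\cdot)$ factors through the integer components $x^\inte$ (the paper states this directly as a consequence of the hypothesis, while you spell out that the feasible set is $x$-independent and the objective depends on $x$ only via $x^\inte$), and both then bound $\abs{\BRcom}$ by $\abs{\FeasStratsInt}$, which is finite by the boundedness of $\FeasStratsRel$, before invoking \Cref{thm: FinitelyManyCutsGeneral}. Your version is, if anything, slightly more explicit than the paper's.
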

\begin{proof}
  By the assumptions of the structure of the cost functions, it is
  clear that $\BR(x)$ only depends on the integer components of $x$,
  i.e., $\BR(x) = \BR(\tilde{x})$ for all $x,\tilde{x}\in \FeasStrats$ with
  $x^\inte = \tilde{x}^\inte$.
  In particular, we have
  \revised{
  \begin{equation*}
    \BRcom=\Defset{\BR(x)}{x \in \FeasStrats} = \Defset{\BR(x)}{x \in
      \FeasStratsInt}.
  \end{equation*}}
  Hence, it follows that $\BRcom$ is finite as $\FeasStratsInt$ is finite.
  The latter is implied by our assumption that $\FeasStratsRel$ is bounded.
\end{proof}

\rev{We close this section with a remark on the different
  assumptions of the last two lemmata.}
\begin{remark}\label{rem: Ride}
  \rev{The assumptions of the first lemma apply to the broad class of
    mixed-integer games with concave cost functions. This scenario
    appears for instance in production games exhibiting economies of
    scales.}
  \rev{The second lemma can be applied to any pure-integer
    game with nonlinear (not necessarily integral polyhedral) strategy
    sets.
    Such situations arise naturally in, e.g., production games for
    indivisible goods in which firms aim to determine the optimal
    integer amount of produced goods subject to nonlinear production
    constraints that arise, for example, due to nonlinear resource
    consumption profiles subject to budget constraints.}
\end{remark}


\subsection{Generalized Nash Equilibrium Problems}
\label{sec:cuts-gne}

In~\textcite{fischetti2018use}, the authors transfer the idea of
intersection cuts (ICs) to bilevel optimization, which
were originally introduced by \textcite{balas1971intersection} in the
context of integer programming.
We follow this approach and derive non-NE-cuts
(cf.~\Cref{def:non-ne-cut}) for GNEPs via ICs.
For the remainder of this section, consider the situation of
\Cref{def:non-ne-cut} (resp.~Line~\ref{line: Cut} of
\Cref{algorithm:BC}) and fix the corresponding integer feasible
solution $(x^*,\proxy^*)$ and a corresponding best response~$y^*$.
With this at hand, we derive sufficient conditions to define a
non-NE-cut via an IC. For this, we need to guarantee the existence of
the following two objects:
\begin{enumerate}[label=(\roman*)]
\item a cone $\Corner$ pointed at $(x^*,\proxy^*)$ containing
  $\Nespi\cap C_t\cap B_t$
\item and an \emph{NE-free set} $\freeset(x^*,\proxy^*)$ at
  $(x^*,\proxy^*)$, i.e., a convex set that contains in its
  interior the point $(x^*,\proxy^*)$ but no point of $\Nespi\cap
  C_t\cap B_t$.
\end{enumerate}

\rev{Given such a cone and NE-free set, an IC exists that is a
  non-NE-cut, i.e., it is valid for all equilibria-tuples in $\Nespi\cap C_t\cap B_t$
  but cuts off $(x^*,\proxy^*)$. We refer to standard textbooks such as
  \textcite[Section 6]{Conforti-et-al:2014}.
  Note that the construction by \textcite[Section 6]{Conforti-et-al:2014}
  is carried out in the MILP setting but extends
  to more general situations. This is due to the intersection cut
  only depending on the structure of the required convex set and cone, but
  not on the set of points which must not be
  cut off; see \textcite{Conforti_CutGenerating}.
  In order to keep the exposition self-contained,
  we describe in Appendix~\ref{app: ICConstruction} the explicit
  construction of an IC. There, we also show in Lemma~\ref{lem:
    ICCutStandNEP} that the IC applied to the NEPs is at most as
  strong as the best-response cut defined in the previous section
  and coincides with the latter if the cost functions are linear in
  the rivals' strategies.}

We start with a discussion of the existence of a suitable cone
$\Corner$.
For simplicity, we consider the case in which $g_i(x) = A_i x - b_i$
holds for a suitable matrix $A_i$ and vector $b_i$.
Hence, we consider a polyhedral setting.
Since ICs are linear cuts, the set of feasible solutions $\FeasRt$ in
a node problem remains a polytope if we only employ ICs as
non-NE-cuts. In this regard,
in case that $(x^*,\proxy^*)$ is a vertex of $\FeasRt$, we can simply
use the corresponding corner polyhedron for $\Corner$.
For general cost functions, this is of course not guaranteed but if
$x \mapsto \sum_{i \in N} \pi_i(x)$ is concave, then
\eqref{model:node_t} admits an optimal solution at a vertex which can
be chosen as $(x^*,\proxy^*)$.

\begin{remark}
  Let us also remark that for the general case in which
  $(x^*,\proxy^*)$ is not a vertex of $\FeasRt$, we can branch
  sufficiently often until $(x^*,\proxy^*)$ becomes a vertex. To see
  this, just observe that $(x^*,\proxy^*)$ is a vertex of the set
  $\FeasRt \cap \defset{ (x,\proxy)}{(x,\proxy)_j \sim_j
    (x^*,\proxy^*)_j \text{ for all } j}$ for any inequality given by
  $\sim_j\, \in \set{\leq,\geq}$.
  In this regard, a slight variant of \Cref{algorithm:BC}, which
  includes an additional if-condition before adding a cut to check for
  the existence of a suitable cone $\Corner$, allows our B\&C method to
  remain applicable without requiring additional restrictive assumptions
  to guarantee the existence of such a cone.
\end{remark}

For the NE-free set, we define for all $i \in N$ the set
\begin{align*}
  \freeset_i(x^*,y^*)
  \define
  & \Defset{(x,\proxy) \in \Rall \times
    \R^N}{\proxy_i > \pi_i(y_i^*,x_{-i}), \, y_i^* \in X_i(x_{-i})}
  \\
  =
  & \Defset{(x,\proxy) \in \Rall \times
    \R^N}{ \proxy_i >
    \pi_i(y_i^*,x_{-i}), \, g_i(y_i^*,x_{-i})\leq 0}.
\end{align*}
This set is convex provided that the players' cost functions are
convex in the rivals strategies and that the set of rivals' strategies
$x_{-i}$ admitting $y_i^*$ as a feasible strategy for player $i$ is
convex, which we formalize in the next lemma.

\begin{lemma}\label{lem: FreeSetConv}
  Let $i \in N(x^*,\proxy^*)$ and
  assume that
  \begin{enumerate}[label=(\roman*)]
  \item the function $ \pi_i(y_i^*,\cdot):\Rallmini \to \R, x_{-i}\mapsto
    \pi_i(y_i^*,x_{-i})$, is convex and \label{lem: FreesetConv: Conc}
  \item $g_i(y_i^*,\cdot):\Rallmini\to \R^{m_i}$,
    $x_{-i} \mapsto g_i(y_i^*,x_{-i})$, is convex.
    \label{lem: FreesetConv: Feas}
  \end{enumerate}
  Then, $\freeset_i(x^*,y^*)$ is a convex set.
\end{lemma}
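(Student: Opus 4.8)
The plan is to show that $\freeset_i(x^*,y^*)$ is a finite intersection of convex subsets of the ambient space $\Rall\times\R^N$ and invoke that such an intersection is again convex. Concretely, I would write
\begin{equation*}
  \freeset_i(x^*,y^*) = A \cap B,
  \qquad
  A \define \Defset{(x,\proxy)\in \Rall\times\R^N}{\proxy_i > \pi_i(y_i^*,x_{-i})},
\end{equation*}
with $B \define \defset{(x,\proxy)\in \Rall\times\R^N}{g_i(y_i^*,x_{-i})\leq 0}$, which is exactly the second description of $\freeset_i(x^*,y^*)$ in the lemma, and then argue convexity of $A$ and $B$ separately.

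For $A$, I would introduce the function $h\colon \Rall\times\R^N\to\R$, $h(x,\proxy)\define \pi_i(y_i^*,x_{-i}) - \proxy_i$. Since $x_{-i}$ consists of a subcollection of the coordinates of $x$, assumption~\ref{lem: FreesetConv: Conc} (convexity of $x_{-i}\mapsto\pi_i(y_i^*,x_{-i})$) implies that $(x,\proxy)\mapsto \pi_i(y_i^*,x_{-i})$ is convex on all of $\Rall\times\R^N$; adding the linear term $-\proxy_i$ preserves convexity, so $h$ is convex. Then $A=\{h<0\}$ is a \emph{strict} sublevel set of a convex function, and such sets are convex: if $h(u)<0$ and $h(v)<0$, then for $\lambda\in[0,1]$ one has $h(\lambda u+(1-\lambda)v)\leq \lambda h(u)+(1-\lambda)h(v)<0$. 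For $B$, I would write $g_i=(g_{i,1},\dots,g_{i,m_i})$; by assumption~\ref{lem: FreesetConv: Feas} each map $(x,\proxy)\mapsto g_{i,r}(y_i^*,x_{-i})$ is convex, hence its $0$-sublevel set is convex for every $r$, and $B$ is the intersection of these finitely many convex sets, thus convex. Consequently $\freeset_i(x^*,y^*)=A\cap B$ is convex.

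I do not expect a genuine obstacle here; the argument is essentially bookkeeping with convex sublevel sets. The only points worth flagging explicitly are (i) that it is the \emph{strict} sublevel set of $h$ that occurs, so one should note that strict sublevel sets of convex functions remain convex, and (ii) that a function that is convex in a subset of the variables extends trivially to a convex function on the full ambient space. I would also remark that the hypothesis $i\in N(x^*,\proxy^*)$ is not used in the convexity proof itself; it is only needed afterwards to ensure that $(x^*,\proxy^*)$ lies in the interior of $\freeset_i(x^*,y^*)$ while no tuple $(x,\pi(x))\in\Nespi\cap C_t\cap B_t$ does, which is what makes $\freeset_i(x^*,y^*)$ an NE-free set suitable for generating an intersection cut.
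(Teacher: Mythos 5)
Your proposal is correct and follows essentially the same route as the paper's proof, which also rewrites the first condition as $\pi_i(y_i^*,x_{-i})-\proxy_i<0$ and observes that both defining conditions are convex restrictions whose intersection is convex. Your version merely spells out the bookkeeping (strict sublevel sets of convex functions are convex; convexity lifts from the $x_{-i}$-coordinates to the full ambient space) that the paper leaves implicit, and your closing remark that $i\in N(x^*,\proxy^*)$ is not needed for convexity itself is accurate.
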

\begin{proof}
  By rewriting the first condition of $\freeset_i(x^*,y^*)$ via
  $\pi_i(y_i^*,x_{-i})- \proxy_i < 0$ and by using~\ref{lem:
    FreesetConv: Conc}, it follows that
  this is a convex restriction. Since by \ref{lem: FreesetConv: Feas},
  the second condition is convex as well, the convexity of
  $\freeset_i(x^*,y^*)$ follows.
\end{proof}

\rev{Let us note that the assumptions of the last lemma are the reason
   why we need to restrict the class of mixed-integer GNEPs that we
  can tackle to those with cost and constraint functions that are
  convex in the rivals' strategies.}

\begin{lemma}\label{lem: FreeSet}
  It holds $(x^*,\proxy^*)\in \freeset_i(x^*,y^*)$ for any
  $i \in N(x^*,\proxy^*)$. Moreover, $\freeset_i(x^*,y^*)$ does not
  contain any point of the intersection $\Nespi\cap C_t\cap B_t$ for all $i
  \in N$.
\end{lemma}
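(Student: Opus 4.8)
The plan is to verify both assertions directly by unfolding the definition of $\freeset_i(x^*,y^*)$ together with the fact that $y^*$ is a best response to $x^*$.

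\emph{First assertion.} Fix $i \in N(x^*,\proxy^*)$. I would first observe that $y^* \in X(x^*) = \prod_{j\in N} X_j(x^*_{-j})$, so in particular $y_i^* \in X_i(x^*_{-i})$, i.e. $g_i(y_i^*,x^*_{-i}) \leq 0$; this is precisely the second defining condition of $\freeset_i(x^*,y^*)$ evaluated at $(x^*,\proxy^*)$. For the first, strict, condition the key point is that the objective $\sum_{j\in N}\pi_j(y_j,x^*_{-j})$ minimised to obtain $y^*$ in Line~\ref{line: SolveBR} is separable over the product set $X(x^*)$ (each summand depends only on $y_j$, since $x^*_{-j}$ is fixed), so that the minimiser $y^*$ is componentwise optimal, $y_i^* \in \argmin_{y_i\in X_i(x^*_{-i})}\pi_i(y_i,x^*_{-i})$, whence $\pi_i(y_i^*,x^*_{-i}) = \VBR_i(x^*_{-i})$. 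By the definition of $N(x^*,\proxy^*)$ this gives $\proxy^*_i > \VBR_i(x^*_{-i}) = \pi_i(y_i^*,x^*_{-i})$, which is exactly the first defining condition; hence $(x^*,\proxy^*)\in \freeset_i(x^*,y^*)$.

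\emph{Second assertion.} For arbitrary $i \in N$, I would argue by contradiction: suppose a tuple $(\bar x,\pi(\bar x)) \in \Nespi$ lies in $\freeset_i(x^*,y^*)$. Evaluating its two defining conditions at $x = \bar x$ and $\proxy = \pi(\bar x)$ yields $g_i(y_i^*,\bar x_{-i}) \leq 0$, i.e. $y_i^*\in X_i(\bar x_{-i})$, and $\pi_i(\bar x) = (\pi(\bar x))_i > \pi_i(y_i^*,\bar x_{-i})$. But $\bar x\in \Nes$ means that $\bar x_i$ is a best response to $\bar x_{-i}$, so $\pi_i(\bar x_i,\bar x_{-i}) \leq \pi_i(z_i,\bar x_{-i})$ for every $z_i\in X_i(\bar x_{-i})$; applying this with $z_i = y_i^*$ gives $\pi_i(\bar x) \leq \pi_i(y_i^*,\bar x_{-i})$, contradicting the strict inequality. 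Thus $\Nespi \cap \freeset_i(x^*,y^*) = \emptyset$, and a fortiori $\freeset_i(x^*,y^*)$ contains no point of $\Nespi\cap C_t\cap B_t$ (the sets $C_t$ and $B_t$ do not even enter the argument).

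\emph{Main obstacle.} There is essentially no difficulty here; the claim follows by unwinding the definitions. The only step that deserves a moment of care is the separability argument in the first assertion, which guarantees that the joint best response $y^*$ is componentwise optimal, so that $\pi_i(y_i^*,x^*_{-i})$ \emph{equals} the best-response value $\VBR_i(x^*_{-i})$ rather than merely bounding it from below.
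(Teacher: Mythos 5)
Your proof is correct and follows essentially the same route as the paper's: both parts reduce to unwinding the definition of $\freeset_i(x^*,y^*)$ and using that $y_i^*$ is an individual best response to $x^*_{-i}$ (for membership of $(x^*,\proxy^*)$) respectively that $\bar{x}_i$ is a best response to $\bar{x}_{-i}$ (for NE-freeness). The only difference is that you spell out the separability argument showing the joint minimiser $y^*$ is componentwise optimal, which the paper's proof takes as read.
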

\begin{proof}
  It holds $(x^*,\proxy^*) \in \freeset_i(x^*,y^*)$ for any $i
  \in N(x^*,\proxy^*)$ because $y_i^* \in \argmin_{y_i \in
    X_i(x_{-i}^*)} \pi_i(y_i,x_{-i}^*)$ implies
  $\proxy^*_i>\VBR_i(x^*_{-i}) = \pi_i(y_i^*,x^*_{-i})$ and $y^*_i \in
  X_i(x^*_{-i})$.

  Moreover, for any $(\bar{x},\pi(\bar{x}))\in \Nespi\cap
  C_t\cap B_t$ and $i \in N$ with $y^*_i \in X_i(\bar{x}_{-i})$, we
  have that
  \begin{align*}
    \pi_i(\bar{x}) =
    \min_{y_i \in X_i(\bar{x}_{-i})} \pi_i(y_i,\bar{x}_{-i}) \leq
    \pi_i(y_i^*,\bar{x}_{-i})
  \end{align*}
  holds, showing that $(\bar{x},\pi(\bar{x})) \notin
  \freeset_i(x^*,y^*)$.
\end{proof}

The set $\freeset_i(x^*,y^*)$ is, in general, not
suitable for deriving ICs as it is not guaranteed that
$(x^*,\proxy^*)$ belongs to its interior.
This leads us to define, for any $\varepsilon > 0$, an extended
version of $\freeset_i(x^*,y^*)$ via
\begin{align*}
  \freeset[\varepsilon]_i(x^*,y^*) \define
  \Defset{(x,\proxy) \in \Rall \times \R^N}{\proxy_i \geq
  \pi_i(y_i^*,x_{-i}), \, g_i(y_i^*,x_{-i})\leq \varepsilon
  \mathbf{1}},
\end{align*}
where we denote by $\mathbf{1}$ the vector of all ones (in appropriate
dimension).
Provided that no point in $\Nespi\cap C_t\cap B_t$ is contained in
the interior of this extended set, it follows from \Cref{lem:
  FreeSetConv,lem: FreeSet} that  $\freeset[\varepsilon]_i(x^*,y^*)$
is an NE-free set under the assumptions of \Cref{lem: FreeSetConv}.
This naturally raises the question for which values of $\varepsilon >
0$ and under what circumstances can this condition be guaranteed.
In this regard, we provide sufficient conditions in the following.

\begin{lemma}\label{lem: SuffConNEfree}
  Consider some $i \in N(x^*,\proxy^*)$ and the following statements
  with a suitable integral matrix~$A_i$ and vector~$b_i$:
  \begin{enumerate}[label=(\roman*)]
  \item $g_i(y_i^*,\bar{x}_{-i})$ is integral for every $\bar{x} \in
    \Nes$. \label{lem: SuffConNEfree: Int}
  \item $y_i^*$ is integral and $g_i(y_i^*,\bar{x}_{-i}) = A_i
    (y_i^*,\bar{x}_{-i}^\inte) - b_i$ for all $\bar{x} \in
    \Nes$.\label{lem: SuffConNEfree: IntDep}
  \item $g_i(y_i^*,\bar{x}_{-i}) = A_i
    ((y_i^*)^\inte,\bar{x}_{-i}^\inte) - b_i$ for all $\bar{x} \in
    \Nes$.\label{lem: SuffConNEfree: IntDepOnly}
  \item $y_i^*$ and all  $\bar{x} \in \Nes$ are integral and
    $g_i(y_i^*,\bar{x}_{-i}) = A_i (y_i^*,\bar{x}_{-i})  - b_i$
    holds for all $\bar{x} \in \Nes$.\label{lem: SuffConNEfree:
      IntAll}
  \end{enumerate}
  If \ref{lem: SuffConNEfree: Int} holds, then
  $\freeset[\varepsilon]_i(x^*,y^*)$ with $\varepsilon = 1$ does not
  contain any point of $\Nespi\cap C_t\cap B_t$ in its
  interior. Moreover, each of \ref{lem: SuffConNEfree: IntDep},
  \ref{lem: SuffConNEfree: IntDepOnly}, and \ref{lem: SuffConNEfree:
    IntAll} imply \ref{lem: SuffConNEfree: Int}.
\end{lemma}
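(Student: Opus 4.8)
The plan is to prove the two assertions of \Cref{lem: SuffConNEfree} separately: first that condition \ref{lem: SuffConNEfree: Int} suffices for the NE-freeness of the interior of $\freeset[1]_i(x^*,y^*)$, and then that each of the three structural conditions \ref{lem: SuffConNEfree: IntDep}, \ref{lem: SuffConNEfree: IntDepOnly}, \ref{lem: SuffConNEfree: IntAll} is a special case of \ref{lem: SuffConNEfree: Int}.

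For the main implication, suppose \ref{lem: SuffConNEfree: Int} holds and let $(\bar{x},\pi(\bar{x})) \in \Nespi \cap C_t \cap B_t$ be arbitrary; I want to show it does not lie in the interior of $\freeset[1]_i(x^*,y^*)$. There are two cases. If $y_i^* \notin X_i(\bar{x}_{-i})$, i.e.\ $g_i(y_i^*,\bar{x}_{-i}) \not\leq 0$, then some component of $g_i(y_i^*,\bar{x}_{-i})$ is strictly positive, and since this value is integral by \ref{lem: SuffConNEfree: Int}, that component is $\geq 1 = \varepsilon$. Hence $(\bar{x},\pi(\bar{x}))$ violates the constraint $g_i(y_i^*,x_{-i}) \leq \varepsilon \mathbf{1}$ with a component that is $\geq \varepsilon$, not $< \varepsilon$, so it is not in the interior of $\freeset[1]_i$ (the interior requires strict inequality in every scalar constraint, in particular $g_i(y_i^*,x_{-i}) < \varepsilon\mathbf{1}$ componentwise). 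If instead $y_i^* \in X_i(\bar{x}_{-i})$, then the equilibrium property of $\bar{x}$ gives $\pi_i(\bar{x}) = \min_{y_i \in X_i(\bar{x}_{-i})} \pi_i(y_i,\bar{x}_{-i}) \leq \pi_i(y_i^*,\bar{x}_{-i})$, exactly as in the proof of \Cref{lem: FreeSet}. Since the interior of $\freeset[1]_i$ requires $\proxy_i > \pi_i(y_i^*,x_{-i})$ strictly, and here $\pi(\bar{x})_i = \pi_i(\bar{x}) \leq \pi_i(y_i^*,\bar{x}_{-i})$, the point $(\bar{x},\pi(\bar{x}))$ again fails to be interior. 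In both cases $(\bar{x},\pi(\bar{x})) \notin \inter{\freeset[1]_i(x^*,y^*)}$, which is what we wanted.

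For the three implications, the argument is routine: in each case we must check that $g_i(y_i^*,\bar{x}_{-i})$ is integral for every $\bar{x} \in \Nes$, given an integral matrix $A_i$ and integral vector $b_i$. Under \ref{lem: SuffConNEfree: IntDep}, $g_i(y_i^*,\bar{x}_{-i}) = A_i(y_i^*,\bar{x}_{-i}^\inte) - b_i$ is an affine image of the integral vector $(y_i^*,\bar{x}_{-i}^\inte)$ (integral since $y_i^*$ is integral by hypothesis and $\bar{x}_{-i}^\inte$ consists of integer strategy components) under the integral matrix $A_i$, hence integral. Under \ref{lem: SuffConNEfree: IntDepOnly}, $g_i(y_i^*,\bar{x}_{-i}) = A_i((y_i^*)^\inte,\bar{x}_{-i}^\inte) - b_i$ only involves the integer components $(y_i^*)^\inte$ and $\bar{x}_{-i}^\inte$, which are automatically integral, so again the expression is integral. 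Under \ref{lem: SuffConNEfree: IntAll}, both $y_i^*$ and $\bar{x}$ are fully integral and $g_i(y_i^*,\bar{x}_{-i}) = A_i(y_i^*,\bar{x}_{-i}) - b_i$ is an affine image of an integral vector under an integral matrix, hence integral. Thus each condition implies \ref{lem: SuffConNEfree: Int}, completing the proof.

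I do not expect a serious obstacle here; the one subtlety to get right is the precise description of the interior of $\freeset[\varepsilon]_i(x^*,y^*)$ — namely that a point is interior only if it satisfies \emph{all} the defining inequalities strictly (so $\proxy_i > \pi_i(y_i^*,x_{-i})$ and $g_i(y_i^*,x_{-i}) < \varepsilon\mathbf{1}$ componentwise), which needs the relevant functions to be continuous so that strict inequalities define an open set. This follows from the continuity assumptions already in force (the extension of $\pi_i$ is lower semicontinuous, and for the polyhedral $g_i$ in this section it is affine, hence continuous), so the argument goes through; I would state this observation explicitly at the start of the proof to make the interior characterization clean.
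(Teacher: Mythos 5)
Your proof is correct and follows essentially the same route as the paper's: the paper argues by contradiction that an interior point of $\freeset[1]_i(x^*,y^*)$ would satisfy $(g_i(y_i^*,\bar{x}_{-i}))_j<1$ and $\pi_i(\bar{x})>\pi_i(y_i^*,\bar{x}_{-i})$, uses integrality to conclude $g_i(y_i^*,\bar{x}_{-i})\leq 0$, and then invokes \Cref{lem: FreeSet}, which is exactly the contrapositive of your two-case analysis (your second case re-derives the relevant part of \Cref{lem: FreeSet} inline). The three implications are handled in the paper by the same one-line integrality observation you give.
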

\begin{proof}
  The implications \ref{lem: SuffConNEfree: IntDep} $\Rightarrow$
  \ref{lem: SuffConNEfree: Int}, \ref{lem: SuffConNEfree:
    IntDepOnly} $\Rightarrow$ \ref{lem: SuffConNEfree: Int}, and
  \ref{lem: SuffConNEfree: IntAll} $\Rightarrow$ \ref{lem:
    SuffConNEfree: Int} follow immediately by the integrality of
  $A_i$ and $b_i$.

  Now assume that \ref{lem: SuffConNEfree: Int}  holds and consider a
  point $(\bar{x},\pi(\bar{x})) \in \Nespi\cap C_t\cap B_t$ as well as
  $\varepsilon =1$.
  Assume, for the sake of a contradiction, that the point is in the
  interior of $\freeset[\varepsilon]_i(x^*,y^*)$.
  Then, $\pi_i(\bar{x})> \pi_i(y_i^*,\bar{x}_{-i})$ and
  $(g_i(y_i^*,\bar{x}_{-i}))_j< {1}$ holds for every $j \leq m_i$.
  The integrality of $g_i(y_i^*,\bar{x}_{-i})$ implies
  $g_i(y_i^*,\bar{x}_{-i})\leq 0$.
  Hence,  $(\bar{x},\pi(\bar{x})) \in \freeset_i(x^*,y^*)$ holds,
  which contradicts \Cref{lem: FreeSet}. This contradiction
  implies that the latter set does not contain any point of
  $\Nespi\cap C_t\cap B_t$.
\end{proof}
\revised{Remark that $N(x^*,\proxy^*) \neq \emptyset$ holds in the situation of
    	Line~\ref{line: proxy>BR} by \Cref{lem: proxy>BR}, i.e., in any of the cases of the lemma,
    	there exists $i \in N(x^*,\proxy^*)$ for which
    	we can use $\freeset[\varepsilon]_i(x^*,y^*)$ with $\varepsilon = 1$ in
    	order to derive a non-NE-cut via an IC.}

Let us note that analogous assumptions are made in the respective
literature on mixed-integer bilevel optimization; see, e.g.,
\rev{\textcite{Wang_Xu:2017},} \textcite{fischetti2018use},
\textcite{Lozano_Smith:2017}, or \textcite{Horländer_et_al:2024}.
\rev{In practice, very often Condition~(i) is satisfied due to all
  problem data being integer (or, w.l.o.g., rational) and all players'
  variables being integer as well, which is easy to check in
  practice.}

\revised{Summarizing this section,
  we can construct an NE-free set under the fulfillment of the
  conditions of Lemma~\ref{lem: FreeSetConv} and one of the conditions in Lemma~\ref{lem: SuffConNEfree}.
  Moreover, if we have linear constraints and concave social costs, the optimal node solution
  is attained at a vertex of the polyhedral feasible set and we can use the corresponding corner
  polyhedron to ensure the existence of a pointed cone $\Corner$ as required for the construction of an IC.
  This results in the following theorem.
  \begin{theorem}\label{thm: GNEPs}
    In the situation of Line~\ref{line: Cut} of \Cref{algorithm:BC},
    we can guarantee the existence of a non-NE-cut
    via intersection cuts if
    \begin{enumerate}[label=(\roman*)]
    \item the conditions of Lemma~\ref{lem: FreeSetConv} and one of
      the conditions in \Cref{lem: SuffConNEfree} is fulfilled, and
    \item the constraints are linear and social costs are concave.
    \end{enumerate}
    In particular, we can  guarantee the existence of a non-NE-cut
    via intersection cuts for pure-integer GNEPs with linear
    constraints and linear player objectives that depend solely on the
    own variables.
  \end{theorem}
}



\section{Numerical Results}
\label{sec:numerical-results}

In this section, we discuss the numerical results of the methods
presented and analyzed so far.
To this end, we start by discussing some implementation details as
well as the software and hardware setup in Section~\ref{sec:impl-deta}.
Afterward, we present the different types of games to which we apply
our methods in Section~\ref{sec:test-instances}.
The specific way of generating the test instances is presented in
Section~\ref{sec:generation-of-instances}.
Finally, the actual numerical results are discussed in
Section~\ref{sec:analysis-of-the-results}.

\subsection{Implementation Details}
\label{sec:impl-deta}

All numerical experiments have been executed on a single core
Intel Xeon Gold 6126 processor at \SI{2.6}{GHz} with \SI{4}{GB} of
RAM. In what follows, all the non-default parameter values have
been chosen based on preliminary numerical testing.
We consider a strategy profile~$x$ to be an NE if $\hat{V}(x) \leq
\revised{10^{-3}}$ holds.

\revised{The implementation of Algorithm~\ref{algorithm:BC} is
  available online.\footnote{See
    \url{https://github.com/AloisDuguet/branch-and-cut-for-ipgs}.}}
It is implemented in \textsf{C++} and
compiled with \textsf{GCC}~13.1.
For the pruning step in Line~\ref{line: Infeasible}, we check
if the objective value is greater than \revised{$10^{-3}$}.
In addition, in Lines~\ref{line:step_1} and~\ref{line: SolveBR},
we solve MIQPs or MILPs using \textsf{Gurobi}~12.0
\parencite{gurobi} with the parameter \textsf{feasTol} set to
$10^{-9}$ and the parameter \textsf{MIPGap} set to its default when
solving the node problem and set to $0$ when solving the best-response
problems.
Finally, a cut is added in Line \ref{line: Cut} if the difference
between $\eta^*_i$ and the best response value exceeds $10^{-4}$
and if the violation of the produced cut evaluated at the current
node's optimal solution $(x^*,\proxy^*)$ is greater than
$5 \cdot 10^{-6}$.

The exploration strategy of the branching scheme is depth-first
search, while the variable chosen for branching is the most
fractional one. In case of a tie, the smallest index is chosen.
While the performance of our method most likely would benefit from
more sophisticated node selection strategies and branching rules,
their study and implementation is out of the scope of this paper.

\revised{The branch-and-prune algorithm of
  \textcite{schwarze2023branch} is implemented in MATLAB 24.2 and run
  from the command line with the batch option.
  Its criterion to check if a feasible strategy profile is an NE is
  modified to fit the one used in Algorithm~\ref{algorithm:BC}.
  \textsf{Gurobi}~12.0 is used to solve MIQPs with the
  parameter \textsf{feasTol} set to its default value $10^{-6}$ and the
  feasibility tolerance used in other places is set to $10^{-5}$.}

\subsection{Description of the Games for the Numerical Experiments}
\label{sec:test-instances}

\subsubsection{The Knapsack Game}
\label{sec:knapsack-game}

We consider a situation with $n$~players and all of them solve a
knapsack-type problem with $m$ items. This game thus is an NEP and we
compute a pure NE. The best-response problem of each player is an MILP
that is given by
\begin{align*}
  \max_{x_i} \quad
  & \sum_{j=1}^m p_{ij} x_{i_j}
    + \sum_{k=1, k\neq i}^n \sum_{j=1}^m C_{ikj} x_{i_j} x_{k_j}
  \\
  \st \quad
  & \sum_{j=1}^m w_{ij} x_{i_j} \leq b_i, \quad x_i \in [0,1]^m, \\
  & x_{i_j} \in \mathbb Z, \quad j \in I \subseteq \{1,\dotsc,m\}.
\end{align*}
Note that the subset~$I$ is the same for all players.
In our numerical study, we consider two scenarios for the subset $ I
\subseteq \{1, \dotsc, m\}$ representing the indivisible items:
the full integer case, where all items are indivisible
($I = \{1, \dotsc, m\}$), and a mixed-integer case,
where only half of the items are subject to integrality constraints.
As usual, we assume that the profits~$p_{ij}$, the weights~$w_{ij}$,
and the capacities $b_i$ are non-negative integers.
For the interaction coefficients $C_{ikj}$ we assume that they are
general integers.
For more details, we refer to \textcite{dragotto2023zero}, where this
problem is considered for the pure integer case.

Note that this game fulfills \Cref{ass:GeneralAss} and, hence,
our B\&C method is applicable. Moreover, it terminates in finite time
using the non-NE-cuts as defined in~\eqref{eq: PlayerCut}. This is
implied by
\Cref{lem: FinitelyManyBRSetsInteger} for the full integer case,
respectively \Cref{lem: FinitelyManyBRSetsConcave} for both cases.

\subsubsection{The Generalized Knapsack Game}
\label{sec:generalized-knapsack-game}

We consider again the situation of $n$~players and all of them solve a
knapsack-type problem with a common set of $m$ items. This game,
however, is a GNEP and we again look for pure NE. The best-response
problems are ILPs given (for player~$i$) by
\begin{align*}
  \max_{x_i} \quad
  & \sum_{j=1}^m p_{ij} x_{i_j}
  \\
  \st \quad
  & \sum_{j=1}^m w_{ij} x_{i_j} \leq b_i, \quad x_i \in \set{0,1}^m,
  \\
  & \sum_{k = 1}^n x_{k_j} \leq c_j, \quad j = 1, \dotsc, m,
\end{align*}
with $c_j \in \set{1,\dotsc,n}$.
As before, we assume that the profits~$p_{ij}$, the weights~$w_{ij}$,
and the capacities $b_i$ are non-negative integers.

Remark that this game also fulfills \Cref{ass:GeneralAss}.
Moreover, the social cost function $x \mapsto \sum_{i \in N}\pi_i(x)$
is linear in $x$.
As outlined in \Cref{sec:cuts-gne}, this guarantees the existence of
an optimal solution to the node problem at a vertex of the underlying
feasible set.
For such a vertex, we can then use
the associated  corner polyhedron in order to derive an IC.
Moreover, remark that \Cref{lem: FreeSetConv} is applicable.
Similarly, \Cref{lem: SuffConNEfree}\ref{lem:
  SuffConNEfree: Int} is always fulfilled as the game is a pure
integer game. Thus, the existence of a suitable NE-free set is guaranteed
and our B\&C method is applicable for this game class.

\subsubsection{Implementation Games}

We study a  model of \textcite{Kelly98}  in the domain of TCP-based
congestion control.
To this end, we consider a directed graph~$G=(V,E)$ with nodes~$V$ and
edges~$E$.
The set of players is given by $N= \set{1, \dots, n}$ and each
player~$i\in N$ is associated with an end-to-end pair $(s_i,t_i)\in
V\times V$.
The strategy~$x_i$ of player~$i \in N$ represents an integral
$(s_i, t_i)$-flow with a flow value equal to her demand $d_i \in
\nonNegInts$.
Moreover, a player is restricted in her strategy choice by the
capacity constraints $c \in \nonNegInts^E$, i.e., for given rivals'
strategies $x_{-i}$, her flow $x_i$ has to satisfy the restriction
{$x_i \leq c - \sum_{j\neq i} x_j$}.
Thus, the strategy set of a player $i \in N$ is described by
\begin{equation*}
  X_i(x_{-i}) = X_i' \cap
  \Defset{x_i \in \nonNegInts^E}{x_i \leq c - \sum_{j\neq i} x_j}
  \text{ for all } x_{-i},
\end{equation*}
where $X_i' \define \defset{ x_i\in \Z_+^E}{ A_Gx_i = b_i} \cup
\set{0}$ is the union of the $0$-flow and the flow
polyhedron of player $i$ with $A_G$ being the arc-incidence matrix of
the graph~$G$ and~$b_i$ being the vector with $(b_i)_{s_i} = d_i$,
$(b_i)_{t_i} = -d_i$, and $0$ otherwise.
Note that this allows players to not participate in the game because $x_i
= 0$ is a feasible strategy.
All players want to maximize their utility given by $\mu_i^\top x_i$
for player $i$ choosing strategy~$x_i$ for a given vector $\mu_i\in
\R^E_{\geq 0}$.

In addition to the set~$N$ of players, there is a
central authority, which determines a price vector~$p^*\in \R_{\geq 0}^E$
for the edges with the goal to (weakly) \emph{implement} a certain
edge load vector $u \in \R^E_{\geq 0}$, i.e., the authority wants to
determine a price vector $p^*$ such that there exists a strategy
profile~$x^*$ of the players in~$N$ with the following properties.
\begin{enumerate}[label=(\roman*)]
\item \label{impl: 1} The load is at most~$u$, i.e.,
  $\ell(x^*)\define \sum_{i \in N} x_i^* \leq u$.
\item \label{impl: 2} The strategy~$x^*$ is an equilibrium for the given~$p^*$,
  i.e.,
  \begin{equation*}
    x_i^* \in \argmax \Defset{(\mu_i - p^*)^\top x_i}{x_i \in X_i(x_{-i}^*)}
  \end{equation*}
  holds for all $i$.
\item \label{impl: 3} The edges for which the targeted load is not
  fully used have zero price, i.e., $\ell_e(x^*) < u_e$ implies $p_e^* =0$,
\item \label{impl: 4} The price is bounded from above, i.e., $p^*\leq
  p^{\max}$.
\end{enumerate}
Here, $p^{\max}\in \R^E_{\geq 0}$ is some upper bound on the prices
satisfying
\begin{equation*}
  p^{\max}_e > \abs{E} \cdot \max_{e' \in E}(\mu_{i})_{e'} \cdot
  \max_{e' \in E}c_{e'}
\end{equation*}
for all $i \in N$ and $e \in E$.

For the setting in which no capacity constraints are present
and players are allowed to send fractional arbitrary amounts of flow,  \textcite{Kelly98}
proved that every vector $u$ is weakly implementable.
Allowing a fully fractional distribution of the flow, however, is not
possible in some applications -- the notion of data packets as
indivisible units seems more realistic.
The issue of completely fractional routing versus integrality
requirements has been explicitly addressed by \textcite{Orda93},
\textcite{HarksK16b}, and \textcite{wang2011}. Recently,
\textcite{HarksSchwarzPricing} introduced a unifying framework for
pricing in non-convex resource allocation games,
which, in particular, encompasses the integrality-constrained version of the model
originally studied by \textcite{Kelly98}. They proved (Corollary~7.8)
that for the case of identical utility vectors $\mu_i = \mu$, $i\in N$, and
same sources $s_i = s$, $i\in N$, any integral vector is weakly implementable.
However, in the general case, the implementability of a vector $u$ is not guaranteed.
This raises the question of which vectors are implementable and which are not.

We can model this question as a jointly constrained GNEP with $n+1$ players in which the
first $n$~players correspond to the player set~$N$ and the $(n+1)$-th
player is the central authority.
We denote by $(x,p)$ a strategy  profile and set the costs
to the negated utility $\pi_i(x_i,x_{-i},p) = (p-\mu_i)^\top  x_i$
for $i \in N$ and the costs of the central authority
to  $\pi_{n+1}(p,x) = (u - \ell(x))^\top p$.
The joint restriction set $X$ is given by
\begin{equation*}
  X \define \Defset{(x,p)\in \prod_{i \in N}X_i' \times
    \R^E_{\geq 0}}{\ell(x) \leq c, \, p \leq p^{\max}}.
\end{equation*}

Let us make the relation to GNEPs a bit more formal.

\begin{lemma}
  A tuple $(x^*,p^*)$ (weakly) implements $u$ if and only if
  $(x^*,p^*)$ is an equilibrium of the above described GNEP.
\end{lemma}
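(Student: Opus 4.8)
The plan is to unfold both sides of the claimed equivalence into explicit systems of conditions and then match them term by term. By definition, $(x^*,p^*)$ is an equilibrium of the GNEP with player set $N\cup\{n+1\}$ if and only if: (a)~$(x^*,p^*)\in X$, equivalently $x_i^*\in X_i'$ for all $i\in N$, $\ell(x^*)\le c$, and $0\le p^*\le p^{\max}$; (b)~for each $i\in N$ one has $x_i^*\in\argmin\{(p^*-\mu_i)^\top x_i : x_i\in X_i(x_{-i}^*)\}$, which is the same statement as $x_i^*\in\argmax\{(\mu_i-p^*)^\top x_i : x_i\in X_i(x_{-i}^*)\}$ since the objectives are negatives of one another; and (c)~$p^*\in\argmin\{(u-\ell(x^*))^\top p : 0\le p\le p^{\max}\}$, because the joint constraint $\ell(x)\le c$ involves only the $x$-variables so the authority's feasible set is the fixed box $[0,p^{\max}]$ whenever $x^*$ is feasible. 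On the implementation side we have conditions~\ref{impl: 1}--\ref{impl: 4}, and the goal is to show these hold precisely when (a)--(c) do.

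Most of the correspondence is immediate. Condition~\ref{impl: 2} is verbatim statement~(b). Condition~\ref{impl: 4} is $p^*\le p^{\max}$, which together with $p^*\in\R^E_{\ge 0}$ gives the box membership in~(a); and the membership $x_i^*\in X_i(x_{-i}^*)$ forced by~\ref{impl: 2} already yields $x_i^*\in X_i'$ and, via $x_i^*\le c-\sum_{j\ne i}x_j^*$, the remaining feasibility $\ell(x^*)\le c$. Hence (a) and (b) together are interchangeable with~\ref{impl: 2} and~\ref{impl: 4}, and the whole content sits in statement~(c). Since $(u-\ell(x^*))^\top p$ is separable over edges and $p$ ranges over $[0,p^{\max}]$, the authority's optimality is equivalent to the edge-wise rule: $\ell_e(x^*)<u_e\Rightarrow p_e^*=0$; $\ell_e(x^*)>u_e\Rightarrow p_e^*=p_e^{\max}$; and $\ell_e(x^*)=u_e$ leaves $p_e^*$ free. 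The first alternative is exactly condition~\ref{impl: 3}, and conversely combining~\ref{impl: 1} and~\ref{impl: 3} with feasibility reconstitutes~(c). So the entire equivalence boils down to one point: deriving condition~\ref{impl: 1}, i.e.\ $\ell(x^*)\le u$, from the equilibrium property.

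This last point is the main obstacle, and it is exactly where the lower bound on $p^{\max}$ enters. I would argue by contradiction: suppose $\ell_e(x^*)>u_e\ge 0$ for some edge~$e$. Then $\ell_e(x^*)\ge 1$ by integrality, so some player~$i$ has $x_{i,e}^*\ge 1$, and by the edge-wise rule $p_e^*=p_e^{\max}$. Compare player~$i$'s payoff at her purported best response $x_i^*$ with that of the zero flow $0$, which lies in $X_i'$ and is feasible because $\sum_{j\ne i}x_j^*\le\ell(x^*)\le c$: the zero flow yields payoff $0$, whereas $(\mu_i-p^*)^\top x_i^*\le\mu_i^\top x_i^*-p_e^{\max}\le\bigl(\max_{e'\in E}(\mu_i)_{e'}\bigr)\sum_{e'}x_{i,e'}^*-p_e^{\max}\le\abs{E}\cdot\max_{e'}(\mu_i)_{e'}\cdot\max_{e'}c_{e'}-p_e^{\max}<0$, using $p^*\ge 0$, $x^*\ge 0$, the bound $x_{i,e'}^*\le\ell_{e'}(x^*)\le c_{e'}$ from~(a), and the defining inequality $p_e^{\max}>\abs{E}\cdot\max_{e'}(\mu_i)_{e'}\cdot\max_{e'}c_{e'}$. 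Thus $x_i^*$ would be strictly dominated by the zero flow, contradicting optimality; hence $\ell(x^*)\le u$. With~\ref{impl: 1} in hand all four implementation conditions hold, proving one direction; the converse needs nothing new, since from~\ref{impl: 1}--\ref{impl: 4} the feasibility~(a) and players' optimality~(b) come from~\ref{impl: 2}, \ref{impl: 4}, and $p^*\ge 0$, while~\ref{impl: 1} and~\ref{impl: 3} give, edge by edge, that $p^*$ attains $\min\{(u-\ell(x^*))^\top p : 0\le p\le p^{\max}\}$, so the authority plays a best response.
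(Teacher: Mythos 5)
Your proposal is correct and follows essentially the same route as the paper: feasibility and conditions \ref{impl: 2} and \ref{impl: 4} are matched directly, the authority's optimality is identified with \ref{impl: 1} plus \ref{impl: 3} via nonnegativity of $(u-\ell(x^*))$ over the price box, and condition \ref{impl: 1} is recovered from the equilibrium property by the same contradiction argument comparing $x_i^*$ against the zero flow using the lower bound on $p^{\max}$. Your edge-wise separability formulation of the authority's problem is just a slightly more explicit phrasing of the paper's argument.
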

\begin{proof}
  First, let $(x^*,p^*)$ (weakly) implement~$u$.
  Feasibility, i.e., $(x^*,p^*) \in X$, follows immediately
  from \ref{impl: 2} and \ref{impl: 4}.
  Remark that  \ref{impl: 2} particularly implies that $x_i^* \in
  X_i(x_{-i}^*)$ is feasible, leading to $\ell(x^*) \leq c$.
  By \ref{impl: 2}, the players in $N$ also play an optimal strategy.
  It thus remains to verify that $p^*$ is an optimal strategy for
  the central authority. By \ref{impl: 1}, we have $(u- \ell(x^*))\in
  \R_{\geq 0}^E$ and since prices must be nonnegative, the optimal value
  is bounded from below by $0$. By \ref{impl: 3}, we have
  $(u- \ell(x^*))^\top p^* = 0$, showing the claim.

  Let now $(x^*,p^*)$ be an equilibrium of the above described GNEP.
  Condition~\ref{impl: 4} follows immediately by feasibility, i.e.,
  from $(x^*,p^*) \in X$.
  Condition~\ref{impl: 2} follows by the players in $N$ playing an optimal
  strategy.
  Since $p^*$ is an optimal strategy of the central authority,
  \ref{impl: 3} holds as well.
  For \ref{impl: 1}, assume for the sake of a contradiction that there
  exists an edge~$e \in E$ with $\ell_e(x^*) >u_e$.  By optimality of the
  central authority, this implies $p_e = p_e^{\max}$.
  Let $i \in N$ be a player with $x_{ie}^*\geq 1$.
  Such a player needs to exist due to $\ell_e(x^*) > u_e \geq 0$ and
  because strategies are required to be integral.
  The utility of player~$i$ is bounded by
  \begin{equation*}
    \mu_i^\top x_i \leq \abs{E} \cdot \max_{e' \in E}(\mu_{i})_{e'} \cdot
    \max_{e' \in E}c_{e'}.
  \end{equation*}
  This together with $x_{ie}^*\geq 1$ leads to the the lower bound for
  the costs of player $i$ given by
  \begin{align*}
    \pi_i(x_i^*,x_{-i}^*,p^*) = (p^*-\mu_i)^\top   x_i^* \geq
    p_e^{\max} - \abs{E}\cdot  \max_{e' \in E}(\mu_{i})_{e'} \cdot  \max_{e'
    \in E}c_{e'} > 0,
  \end{align*}
  where the last inequality holds by the definition of $p^{\max}$.
  This, however, contradicts the optimality of player~$i$ since $x_i =
  0$ would lead to zero costs.
\end{proof}

In our branch-and-cut approach, we need to sum up the cost functions
of all players.
In this situation here (including the central authority), these social
costs are given by
\begin{align*}
  \sum_{i \in N} \pi_i(x,p) + \pi_{n+1}(p,x)
  & = \sum_{i \in N} (p-\mu_i)^\top x_i +  (u - \ell(x))^\top p
  \\
  & =
    \sum_{i \in N} -\mu_i^\top x_i  +  \sum_{i \in N} p ^\top
    x_i +  (u - \ell(x))^\top p
  \\
  &=  \sum_{i \in N} -\mu_i^\top x_i  + p ^\top \sum_{i \in N}
    x_i  +  (u - \ell(x))^\top p
  \\
  &= \sum_{i \in N} -\mu_i^\top x_i  + p ^\top \ell(x) +  (u -
    \ell(x))^\top p
  \\
  &= \sum_{i \in N} -\mu_i^\top x_i + u^\top p,
\end{align*}
which is a linear function in $(x,p)$. As outlined in
\Cref{sec:cuts-gne}, this guarantees the existence of an optimal
solution to the node problem at a vertex of the underlying feasibility
set.
For such a vertex, we can then use the associated corner polyhedron to
derive an IC.
Moreover, remark that \Cref{lem: FreeSetConv} is applicable.
Similarly, \Cref{lem: SuffConNEfree}\ref{lem:
  SuffConNEfree: IntDep} is always fulfilled and, thus, the existence
of a suitable NE-free set is guaranteed. Hence, our B\&C method is
applicable for this class of games.

\begin{remark}
  Let us also brief\/ly remark that the game sketched above is not a
  generalized ordinal potential game.
  To show this, consider a graph with two nodes $\set{s,t} = V$, which
  are connected via two parallel edges ${e_1,e_2} = E$.
  Let further $N=\set{1}$ be given with the data $d_1 = 2$, $c =
  (3,3)$, $\mu_1 = (2,1)$, and $u = (1,1)$.
  Consider now the following four strategy profiles~$(x,p)$
  together with the corresponding costs $(\pi_1(x,p),\pi_2(p,x))$ for
  any number $M>2$:
  \begin{align*}
    &x^1=(2,0), p^1=(0,0),(-4,0),
    &\quad
    &x^2=(2,0), p^2=(M,0),(-4+M,-M),
    \\
    &x^3=(0,2), p^3=(M,0),(-2,M),
    &\quad
    &x^4=(0,2), p^4=(0,0),(-2,0).
  \end{align*}
  Then, the improvement of the player deviating from her strategy in
  the sequence of strategy profiles $(x^i,p^i)$, $i = 1,\ldots,5$, with
  $(x^5,p^5)\define (x^1,p^1)$ is always negative.
  Thus, the game cannot be a generalized ordinal potential game.
\end{remark}

\subsubsection{Integer NEPs with Quadratic Objective Functions}
\label{subsubsection:integerNEP}

\revised{We consider the integer-constrained NEPs used in
  \textcite{schwarze2023branch}:
  \begin{align*}
    \min_{x_i \in \Z^m} \quad
    & \frac{1}{2} x_i^\top Q_i x_i + (C_i x_{-i} + d_i)^\top x_i, \\
    \st \quad
    & A_i x_i \leq b_i, \\
    & -5 \leq x_{i,j} \leq 5, \quad j=1,\ldots,m.
  \end{align*}
  For these games, the objective function can be either convex or
  nonconvex.
  More details about these instances are given below.
  Note that this game fulfills \Cref{ass:GeneralAss} and hence
  our B\&C method is applicable. Moreover, by  \Cref{lem:
    FinitelyManyBRSetsInteger}, it terminates in finite time
  using the non-NE-cuts as defined in~\eqref{eq: PlayerCut}.
}

\subsection{Generation of Instances}
\label{sec:generation-of-instances}

To generate a knapsack game instance, we created $n$ knapsack problems with
the same parameters using Pisinger's knapsack problem generator
described in~\textcite{Silvano99}, where $n$ is the number of players. We
generated instances for $n \in \{2,3,4\}$, number of items
$m \in \{5,10,15,20,30,40,50,60,70,80\}$, and the capacity set to 0.2, 0.5,
or 0.8 times the sum of the weights of items of the respective player.
We also produced instances with different types of correlation between
weights and profits of items: an instance either has them uncorrelated,
weakly correlated, or strongly correlated in the sense of Pisinger's
knapsack problem generator. Finally, we generated 5 instances with
the same parameters. This makes a total of 1350~instances. In
addition, we solve those instances with both all variables being
integer and only variables with even indices being integer to test
Algorithm~\ref{algorithm:BC} on integer and mixed-integer
instances.
We denote those two sets of instances NEP-I and NEP-MI
respectively. In both cases, we use the globally valid non-NE-cuts
described in Equation~\eqref{eq: PlayerCut}.

The GNEP knapsack games are generated in the same way, but with
number of items $m \in \{5,10,15,20,30,40,50\}$, player's capacity
set to 0.2 or 0.5 times the sum of the weights of items of the
respective player, and 10~instances with the same parameters.
We do not use the factor 0.8 here because preliminary results showed
that the resulting instances are too easy.
In addition, the parameter~$c_j$ representing
the amount of item~$j$ available for all players is chosen randomly
and uniformly in $\set{1,\dotsc,n}$. This makes a total of 1260
instances. We consider only instances with all variables integer as
the intersection cuts have been shown to be (locally valid)
non-NE-cuts only for this case.

As for implementation game instances, we generate them with the
instances of the jointly capacitated discrete flow game (JCDFG) of
\textcite{Harks24} in the following way.
The matrix $A_G$ corresponds to matrix $A$ of the
JCDFG, the right-hand side $b_i$ is built from
the sources $s_i$, sinks $t_i$, and the flow demand $d_i$
of the JCDFG, the capacities in the vector $c$ are the one from the
capacity vector $c$ of the JCDFG, the utility vector $\mu_i$ corresponds
to the vector $C_i^2$ of the linear utility used in the JCDFG, and,
finally, the edge load vector~$u$ is generated in the same way as the
capacities $c$ of the JCDFG.
The non-NE-cuts used in our experiments for this class of problems are
the intersection cuts.

\revised{Finally, the integer NEP instances with quadratic objective
  functions come from the randomly generated benchmark set of
  \textcite{schwarze2023branch}. It consists of 56 instances for which
  the name describes most of the characteristics of the
  instances. The format of the name of an instance is $XAB_k$ where $X
  \in \set{C,N}$ indicates if the instance is player-convex for each
  player ($C$) or not ($N$).
  The two following numbers~$A$ and~$B$ give the number of players and the
  number of variables of each player and finally the index~$k$
  differentiates the instances with the same other
  characteristics.
  The instances have 2 or 3 players, 2 to 5 variables per
  player and each player's optimization problem has 4 to 10
  constraints. For more details about the instances, see
  \textcite{schwarze2023branch}.
  The instances can be downloaded from
  \url{https://github.com/schwarze-st/nep_pruning}.}

\subsection{Analysis of the Results}
\label{sec:analysis-of-the-results}

\subsubsection{Knapsack Game Results}

Preliminary results showed that the aggregated best-response cuts
described in Equation~\eqref{eq: AggCutOurSetting} perform worse than
the best-response cuts of Equation \eqref{eq: PlayerCut}, so we only show
results for the latter. For the rest of this section, when we say that an
instance is solved, it means that either an NE has been found or that
we prove non-existence -- both within the time limit.

We first compare ourselves on the instance set of knapsack games from
\textcite{dragotto2023zero}.
Indeed, the cutting-plane approach derived in this work for NEP
with only integer variables can in particular be applied to the knapsack games.
While we can solve all instances with 2 players and 25 items within the
time limit, we can only solve 1 instance out of 9 with 2 players and
75 items.
However, the cutting-plane method of \textcite{dragotto2023zero}
solves all instances with 2 players and up to 75 items, as well as 7
instances out of 9 with 100 items.
This result was expected because our approach is more general than
theirs and they implemented two additional types of cuts specific to
the knapsack game while we did not.

Regarding the knapsack game instances we generated as explained in
Section~\ref{sec:generation-of-instances}, approximately \revised{\SI{11}{\%}} of
instances are solved in less than \SI{1}{s}, and
\SI{42}{\%} are solved in the time limit of \SI{1}{h}. All instances
solved have an NE, i.e., non-existence of NE was not proved for any
instance.

\begin{table}
  \centering
  \caption{Number of instances solved by number of players and number of
    items}
  \label{table:NEP_item_player}
  \begin{tabular}{ccccccccccccc} \toprule
    & players & \multicolumn{10}{c}{items} \\ \cmidrule(lr){2-2} \cmidrule(lr){3-12}
    && 5 & 10 & 15 & 20 & 30 & 40 & 50 & 60 & 70 & 80 & \%\\
    \midrule
    &2 & 45&45&45&44&38&27&16&2&6&1 & 60 \\
    NEP-I &3 & 45&45&44&29&8&3&2&1&0&0 & 39 \\
    &4 & 45&45&22&6&1&1&0&0&0&0 & 27 \\ \midrule
    &2 & 45&45&45&41&27&20&11&6&2&2 & 54 \\
    NEP-MI &3 & 45&45&42&28&8&4&1&1&1&0 & 39 \\
    &4 & 45&44&32&14&1&0&0&0&0&0 & 30 \\
    \bottomrule
  \end{tabular}
\end{table}

Table \ref{table:NEP_item_player} shows the number of instances solved
depending on the number of players and the number of items. For each
set of parameters, the algorithm is applied to 45 instances. The last column shows the
percentage of instances solved for the corresponding number of players
among all different number of items.
First, it is clearly visible that instances with
mixed-integer variables and with only integer variables have
relatively similar results.
Moreover, the obvious trend is that the instances get more challenging
the larger the number of items or the number of players are.

\begin{figure}
  \includegraphics[width=0.5\linewidth]{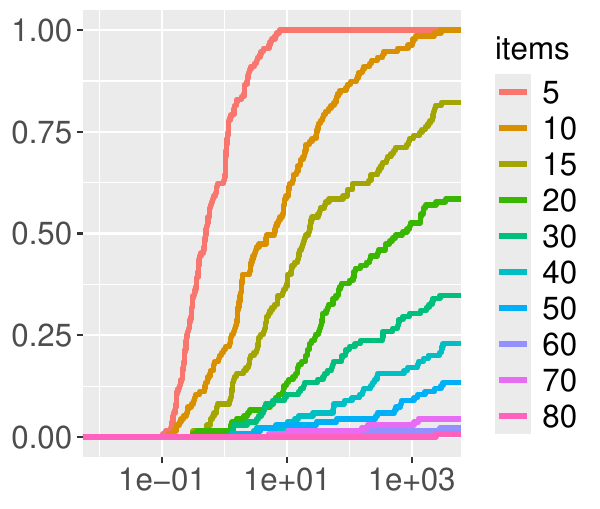}
  \includegraphics[width=0.5\linewidth]{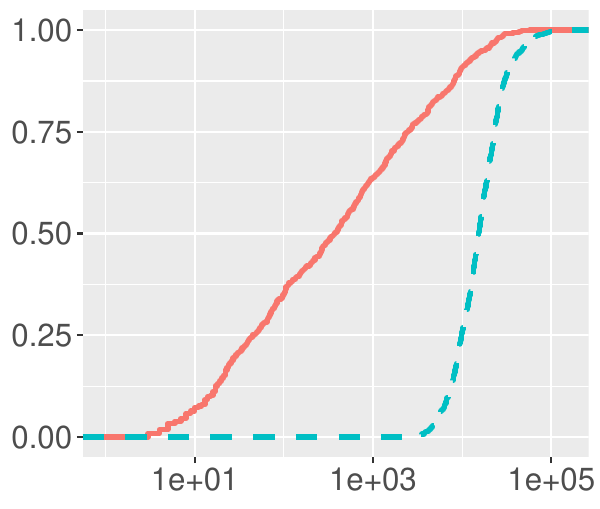}
  \includegraphics[width=0.5\linewidth]{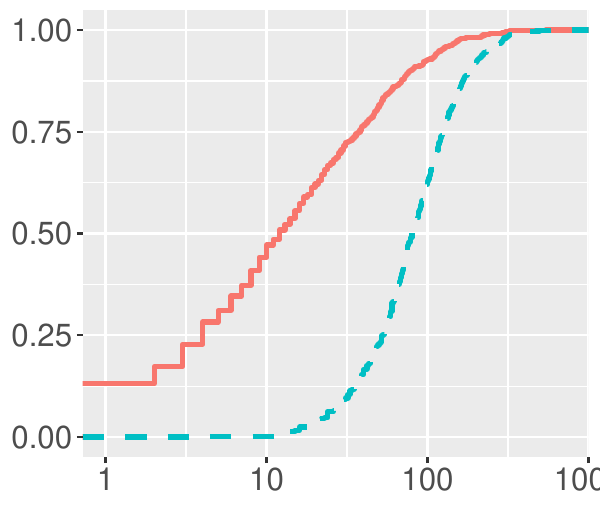}
  \caption{Characteristics of the knapsack game experiments with
    only integer variables. Top: ECDFs of the computation time depending on
    the number of items. Middle: ECDFs of the number of nodes visited in
    the branch-and-cut for solved instances (solid red) and unsolved
    instances (dashed blue). Bottom: ECDFs of the number of cuts derived in
    the branch-and-cut for solved instances (solid red) and unsolved
    instances (dashed blue)}
  \label{figure:ECDF_NEP_knapsack_characteristics}
\end{figure}

The top plot of Figure~\ref{figure:ECDF_NEP_knapsack_characteristics}
shows the empirical cumulative distribution functions (ECDFs) of the
knapsack game instances with integer variables with respect to the
number of items. It can be seen that the difficulty increases rather
regularly with the number of items.
Indeed, instances with 5 items are solved almost instantly, as
\revised{\SI{63}{\%}} of
them are solved in less than \SI{1}{s}, while instances with 50 items
or more are mostly unsolved even after \SI{1}{h} of computation. The
middle plot of Figure~\ref{figure:ECDF_NEP_knapsack_characteristics} shows two ECDFs
of the
number of nodes visited by the branch-and-cut. The solid red curve
considers only solved instances while the dashed blue curve considers
only unsolved instances.
The proportion of instances solved increases
with the number of nodes, so the branching scheme seems to help. Also,
all unsolved instances visited many nodes. In
comparison, the instances with mixed-integer variables produce
much smaller branching trees: they have no more than 1000 nodes and
there are unsolved instances with only 3 nodes -- even though the
finite termination of \Cref{algorithm:BC} for a node is guaranteed by
\Cref{lem: FinitelyManyBRSetsConcave}. It thus seems that
the continuous variables significantly slow down the resolution of
the node problems. In a similar manner, the bottom plot of Figure
\ref{figure:ECDF_NEP_knapsack_characteristics}
shows two ECDFs of the number of cuts derived in the branch-and-cut. It
seems that for some instances, very few integral solutions to the node
problem were found. Indeed, there is an instance with 4 cuts
derived that is unsolved.
The number of cuts derived are similar for the mixed-integer variable
instances.

\subsubsection{Generalized Knapsack Game Results}

To give a first very rough overview:
Approximately \revised{\SI{12}{\%}} of instances are solved in less than
\SI{1}{s} and \revised{\SI{37}{\%}} are solved in the time limit of \SI{1}{h}.
All instances solved found an NE.
Table~\ref{table:GNEP_item_player} shows the number of instances
solved depending on the number of players and the number of items. For
each set of parameters, Algorithm \ref{algorithm:BC} is applied to 60
instances. The last column shows the percentage of instances solved
for the corresponding number of players among all different number of
items.
The trends are comparable to the ones we have seen before.
The instances get harder to solve both for an increasing number of
items and for an increasing number of players.

\begin{table}[!hb]
  \centering
  \caption{Number of instances solved by number of players and number of
    items}
  \label{table:GNEP_item_player}
  \begin{tabular}{ccccccccc} \toprule
    players & \multicolumn{7}{c}{items}\\ \cmidrule(lr){1-1} \cmidrule(lr){2-8}
            & 5 & 10 & 15 & 20 & 30 & 40 & 50 & \% \\ \midrule
    2 & 60&60&52&16&9&3&3 & 48 \\
    3 & 60&54&20&13&3&1&0 & 36 \\
    4 & 60&40&10&3&1&2&1 & 28 \\
    \bottomrule
  \end{tabular}
\end{table}

\begin{figure}
  \centering
  \includegraphics[width=0.5\textwidth]{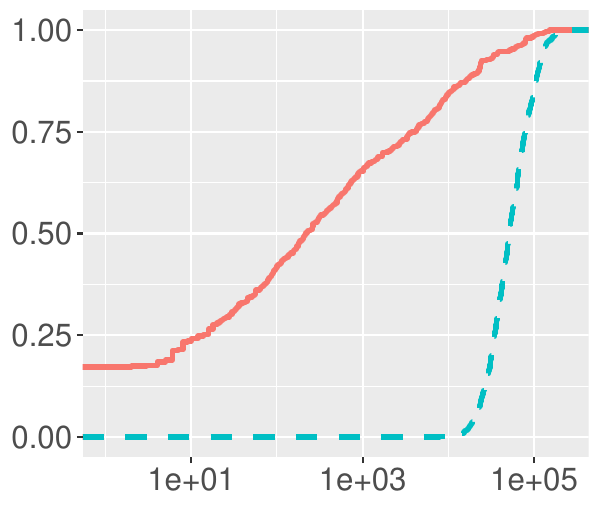}
  \caption{ECDFs of the number of cuts for solved instances (solid red)
    and unsolved instances (dashed blue)}
  \label{figure:GNEP_knapsack_ecdf_cuts}
\end{figure}

The ECDFs of the GNEP knapsack instances with respect to the number of
items and to the number of nodes are similar to the NEP knapsack game
instances with only integer variables. However, as can be seen in
Figure~\ref{figure:GNEP_knapsack_ecdf_cuts}, the number of cuts
derived in the GNEP is way higher than in the NEP case,
with instances with over \num{100000} cuts derived.
This may have different reasons.
First, the node problem is considerably easier to solve in the GNEP
case, because it is an LP while it is a non-convex QP in the other case.
Second, the cuts in the GNEP case are (only) locally valid while they
are globally valid in the NEP case.

\subsubsection{Implementation Game Results}

Here, approximately \SI{5}{\%} of instances are solved in less than
\SI{1}{s} and \revised{\SI{41}{\%}} are solved in the time limit of \SI{1}{h}.
More precisely, an NE was found for \revised{\SI{40}{\%}} of the instances while
\revised{\SI{1}{\%} (5 out of 450)} were proved to have no
NE. \revised{Also, one instance stopped because of numerical issues
  while solving the node problem with \textsf{Gurobi}.}

\begin{table}
  \centering
  \caption{Percentage of instances solved by number of players and
    number of nodes of the network}
  \label{table:implementationGame_node_player}
  \begin{tabular}{cccc} \toprule
    players & \multicolumn{3}{c}{nodes}\\ \cmidrule(lr){1-1} \cmidrule(lr){2-4}
            & 10 & 15 & 20 \\ \midrule
    2 & 85&60&55 \\
    4 & 60&20&30 \\
    10 & 17&7.5&2.5 \\
    \bottomrule
  \end{tabular}
\end{table}

\begin{figure}
  \includegraphics[width=0.5\linewidth]{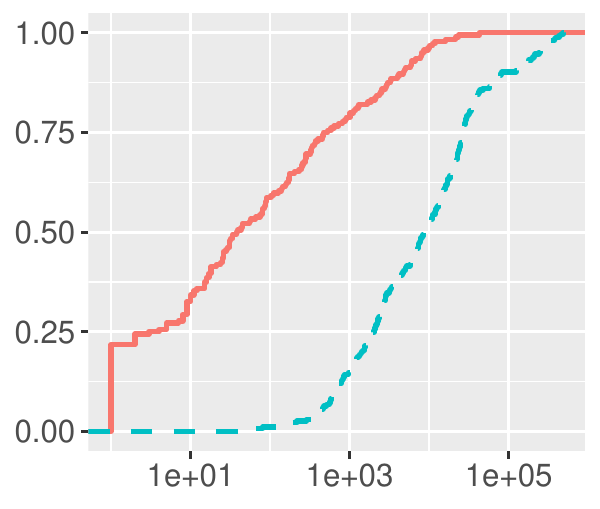}
  \includegraphics[width=0.5\linewidth]{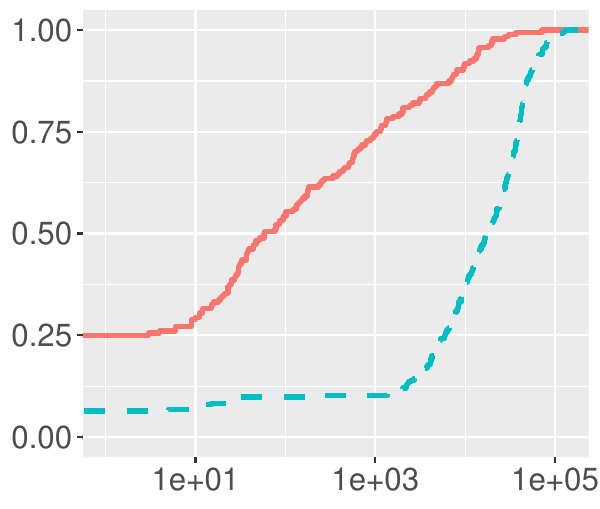}
  \caption{Characteristics of the implementation game experiments. Top:
    ECDFs of the number of nodes visited in the branch-and-cut for
    solved instances (solid red) and unsolved instances (dashed blue).
    Bottom: ECDFs of the number of cuts derived in the branch-and-cut for
    solved instances (solid red) and unsolved instances (dashed blue)}
  \label{figure:ECDF_implementationGame_characteristics}
\end{figure}

Table~\ref{table:implementationGame_node_player} shows the percentage
of instances solved by Algorithm \ref{algorithm:BC}. All instances for
which no NE was proved needed at least 6 cuts.
As can be seen in the top plot of
Figure~\ref{figure:ECDF_implementationGame_characteristics},
\revised{\SI{22}{\%}} of the solved instances are solved in the root node,
i.e., without branching. This might be explained by the fact
that the best responses contain constraints for the conservation of the
flow in a network. They are formulated with a
totally unimodular matrix and thus the solution of an LP containing only
those constraints would be integral.
In our case, the node problem has the flow conservation constraints as
well as some more constraints, so it may be expected to find an integral
solution fast. Similarly, the bottom plot of
Figure~\ref{figure:ECDF_implementationGame_characteristics} shows that
\revised{\SI{25}{\%}} of solved instances are solved with no cuts derived. In
addition, \revised{\SI{6}{\%}} of the unsolved instances reach the time limit
with no cuts derived.
In this case, there are many visited nodes, which shows that the
instances are large enough so that no integral solution was ever
encountered despite the overall large branching tree.
Such cases might benefit from more involved branching strategies.

\subsubsection{\revised{Comparison Between Branch-and-Cut and
    Branch-and-Prune Methods for Integer  NEPs}}
\label{subsubsection:comparisonBPBC}


\begin{table}
  \caption{Comparison of B\&P and B\&C}
  \label{table:comparisonBPBC}
  \centering
  \footnotesize
  \begin{tabular}{cc d{3.2} c d{3.2} rr}
    \toprule
    & \multicolumn{2}{c}{Branch-and-Prune} & \multicolumn{4}{c}{Branch-and-Cut} \\
     \cmidrule{2-3} \cmidrule{4-7}
    ID & status & \multicolumn{1}{c}{\text{time}} & status & \multicolumn{1}{c}{\text{time}} & nodes & cuts \\
    \midrule
    $C22_1$ & NE found & 9.7 & NE found & 0.2 &  11 &   0 \\
    $C22_2$ & NE found & 4.0 & NE found & 2.3 &  30 &   4 \\
    $C22_3$ & NE found & 2.7 & NE found & 2.8 &  54 &   9 \\
    $C22_4$ & $t_{\text{max}}$ & \multicolumn{1}{c}{---} & no NE & 3.5 &  61 &   9 \\
    $C23_1$ & NE found & 1341.7 & NE found & 4.6 &  31 &   4 \\
    $C23_2$ & $t_{\text{max}}$ & \multicolumn{1}{c}{---} & no NE & 100.6 &  91 &  11 \\
    $C23_3$ & NE found & 170.8 & NE found & 74.0 &  51 &   9 \\
    $C23_4$ & $t_{\text{max}}$ & \multicolumn{1}{c}{---} & no NE & 72.4 & 133 &  20 \\
    $C23_5$ & $t_{\text{max}}$ & \multicolumn{1}{c}{---} & no NE & 38.5 &  97 &  14 \\
    $C23_6$ & $t_{\text{max}}$ & \multicolumn{1}{c}{---} & NE found & 11.5 &  65 &  10 \\
    $C23_7$ & NE found & 283.6 & NE found & 6.4 &  27 &   6 \\
    $C23_8$ & NE found & 86.7 & NE found & 9.8 & 118 &  18 \\
    $C24_1$ & NE found & 3275.9 & NE found & 248.0 & 116 &  13 \\
    $C24_2$ & $t_{\text{max}}$ & \multicolumn{1}{c}{---} & no NE & 304.1 & 467 &  32 \\
    $C24_3$ & $t_{\text{max}}$ & \multicolumn{1}{c}{---} & no NE & 805.5 & 355 &  32 \\
    $C24_4$ & $t_{\text{max}}$ & \multicolumn{1}{c}{---} & no NE & 1603.8 & 375 &  16 \\
    $C25_1$ & $t_{\text{max}}$ & \multicolumn{1}{c}{---} & $t_{\text{max}}$ & \multicolumn{1}{c}{---} & 359 &  28 \\
    $C25_2$ & $t_{\text{max}}$ & \multicolumn{1}{c}{---} & $t_{\text{max}}$ & \multicolumn{1}{c}{---} &  96 &  10 \\
    $C25_3$ & $t_{\text{max}}$ & \multicolumn{1}{c}{---} & NE found & 3136.6 & 268 &  21 \\
    $C25_4$ & $t_{\text{max}}$ & \multicolumn{1}{c}{---} & $t_{\text{max}}$ & \multicolumn{1}{c}{---} & 303 &  23 \\
    $C32_1$ & NE found & 70.3 & NE found & 23.6 & 246 &  33 \\
    $C32_2$ & NE found & 6.2 & NE found & 6.3 &  35 &   9 \\
    $C32_3$ & NE found & 3.1 & NE found & 40.7 & 405 &  48 \\
    $C32_4$ & $t_{\text{max}}$ & \multicolumn{1}{c}{---} & no NE & 34.3 & 193 &  34 \\
    $C33_1$ & NE found & 388.5 & $t_{\text{max}}$ & \multicolumn{1}{c}{---} & 866 &  78 \\
    $C33_2$ & $t_{\text{max}}$ & \multicolumn{1}{c}{---} & $t_{\text{max}}$ & \multicolumn{1}{c}{---} & 196 &  26 \\
    $C33_3$ & $t_{\text{max}}$ & \multicolumn{1}{c}{---} & NE found & 711.3 & 715 &  44 \\
    $C33_4$ & $t_{\text{max}}$ & \multicolumn{1}{c}{---} & $t_{\text{max}}$ & \multicolumn{1}{c}{---} & 982 &  51 \\
    $N22_1$ & NE found & 8.6 & NE found & 3.2 &  18 &   4 \\
    $N22_2$ & NE found & 68.2 & NE found & 3.1 &  58 &   8 \\
    $N22_3$ & * & 922.7 & NE found & 4.0 &  67 &  10 \\
    $N22_4$ & $t_{\text{max}}$ & \multicolumn{1}{c}{---} & no NE & 2.3 &  33 &   8 \\
    $N23_1$ & $t_{\text{max}}$ & \multicolumn{1}{c}{---} & NE found & 10.3 & 186 &  27 \\
    $N23_2$ & NE found & 117.2 & NE found & 2.8 &   9 &   4 \\
    $N23_3$ & $t_{\text{max}}$ & \multicolumn{1}{c}{---} & no NE & 3.4 &  25 &   7 \\
    $N23_4$ & error & \multicolumn{1}{c}{---} & NE found & 10.5 & 146 &  20 \\
    $N23_5$ & NE found & 5.7 & NE found & 1.3 &  11 &   4 \\
    $N23_6$ & $t_{\text{max}}$ & \multicolumn{1}{c}{---} & no NE & 20.0 & 159 &   8 \\
    $N23_7$ & $t_{\text{max}}$ & \multicolumn{1}{c}{---} & no NE & 8.4 &  27 &   6 \\
    $N23_8$ & $t_{\text{max}}$ & \multicolumn{1}{c}{---} & no NE & 8.1 &  77 &  12 \\
    $N24_1$ & NE found & 5.2 & NE found & 4.6 &  10 &   4 \\
    $N24_2$ & $t_{\text{max}}$ & \multicolumn{1}{c}{---} & no NE & 72.6 &  21 &   6 \\
    $N24_3$ & $t_{\text{max}}$ & \multicolumn{1}{c}{---} & no NE & 8.0 &  53 &   8 \\
    $N24_4$ & $t_{\text{max}}$ & \multicolumn{1}{c}{---} & no NE & 11.0 &  79 &  17 \\
    $N25_1$ & $t_{\text{max}}$ & \multicolumn{1}{c}{---} & no NE & 108.9 & 125 &  22 \\
    $N25_2$ & NE found & 14.8 & NE found & 36 & 176 &  18 \\
    $N25_3$ & $t_{\text{max}}$ & \multicolumn{1}{c}{---} & NE found & 24.7 &  89 &  13 \\
    $N25_4$ & $t_{\text{max}}$ & \multicolumn{1}{c}{---} & NE found & 26.9 &  87 &  12 \\
    $N32_1$ & $t_{\text{max}}$ & \multicolumn{1}{c}{---} & no NE & 14.4 & 153 &  12 \\
    $N32_2$ & NE found & 339.4 & NE found & 1.8 &  14 &   7 \\
    $N32_3$ & NE found & 36.0 & NE found & 7.3 &  77 &  13 \\
    $N32_4$ & NE found & 929.8 & NE found & 4.6 &  41 &   8 \\
    $N33_1$ & $t_{\text{max}}$ & \multicolumn{1}{c}{---} & NE found & 6.0 &   7 &   0 \\
    $N33_2$ & $t_{\text{max}}$ & \multicolumn{1}{c}{---} & NE found & 193.4 & 297 &  38 \\
    $N33_3$ & $t_{\text{max}}$ & \multicolumn{1}{c}{---} & no NE & 95.1 & 137 &  20 \\
    $N33_4$ & $t_{\text{max}}$ & \multicolumn{1}{c}{---} & NE found & 905.5 & 178 &  27 \\
    \bottomrule
  \end{tabular}
\end{table}


\revised{For the sake of comparison, we restrict the comparison of our
  method (Algorithm~\ref{algorithm:BC}) to other single-tree branch-and-bounds,
  namely the B\&P from \textcite{schwarze2023branch}.
  The multi-tree branch-and-bound method of
  \textcite{dragotto2023zero}, even though it can be applied to those
  instances, is thus out of scope. Nevertheless, note that it
  produces better results than the two methods compared below; see the
  discussion above.
  Table~\ref{table:comparisonBPBC} shows the numerical results for the
  benchmark set of \textcite{schwarze2023branch}.
  The column ``ID'' lists
  the instance names, the two columns ``status'' show if the corresponding
  algorithm is able to find an NE (NE found), prove that no NE exists
  (no NE), or do not terminate because of the time limit ($t_{\text{max}}$).
  There are two special cases, each occurring only once.
  In the first case, denoted by an asterisk, the B\&P algorithm produces a provably
  wrong result, because it wrongly prunes a Nash equilibrium.
  In the second case, denoted by ``error'', the internal Gauß--Seidel
  sub-procedure of the method by \textcite{schwarze2023branch} stops
  after a given maximum number of iterations without computing a
  feasible point of the required accuracy.
  The columns ``time'' display the time spent for the corresponding
  algorithm, with a ``---'' indicating that the time limit has been
  reached.
  Note that the B\&P method by \textcite{schwarze2023branch} can
  compute all NE of the given game.
  However, the time displayed in the third column of
  Table~\ref{table:comparisonBPBC} is the one required to compute the
  first NE.
  Finally, columns ``nodes'' and ``cuts'' list the number of nodes explored
  and cuts derived by Algorithm~\ref{algorithm:BC} before stopping.}

\revised{The B\&P method solves 21 out of 56 instances, while
  Algorithm~\ref{algorithm:BC} solves 50 in the same time limit. In
  addition, the geometric mean of the computation time of the B\&P
  method divided by computation time of the B\&C when both find an NE
  is equal to 6.3, meaning that the B\&C is in significantly faster on
  average.
  Thus Algorithm~\ref{algorithm:BC} clearly outperforms the B\&P
  by \textcite{schwarze2023branch}.}

\subsubsection{\revised{Time Spent in Different Steps for GNEP Instances}}

\revised{We finally discuss the time spent in the different steps of
  the algorithm when applied to GNEPs.
  The results are different depending on which application we
  consider, i.e., knapsack or implementation games.
  Producing the intersection cuts takes around \SI{30}{\%} of the
  computation time for solving a knapsack instance, but around
  \SI{65}{\%}
  for an implementation game instance. Most of the remaining time is
  spent to solve the node problems.
  Concerning the computation of best responses, it only takes \SI{2}{\%}
  of the computation time.
  Finally, most of the computation time to derive an intersection cut,
  i.e., around \SI{80}{\%}, is spent to compute the extreme rays
  describing the corner polyhedron. The remaining time is
  essentially spent to compute the $\alpha$-values.
  Note that more efficient methods for computing ICs are available,
  particularly with respect to the computation of extreme rays. Right now,
  our implementation serves more as a proof-of-concept and
  we leave it for future work to implement improvements that
  reduce the time of
  deriving the intersection cuts to get to an overall more effective
  method.}


\section{Conclusion}
\label{sec:conclusion}

\revised{
In this paper, we presented the first branch-and-cut approach for computing
pure equilibria for mixed-integer games.
We reformulated the problem
as a min-max bilevel problem and transferred techniques from mixed-integer
bilevel optimization to the current setting.
Provided that non-NE-cuts exist and the algorithm terminates in finite time,
we proved the correctness of our method in \Cref{thm:correctness_algBC}, stating that
we either find a pure NE of the game or prove their non-existence.
For standard NEPs, we showed that non-NE-cuts always exist via
best-response cuts and proved the finite termination of our method
under the assumption that the best-response sets are finite (\Cref{thm: FinitelyManyCutsGeneral}).
This condition is then shown to be fulfilled for the important special cases of
(i) players' cost functions being concave in their own continuous
strategies and (ii) the players' cost functions only depending on their
own strategy and the rivals' integer strategy components (\Cref{lem:
  FinitelyManyBRSetsConcave,lem: FinitelyManyBRSetsInteger}).

For GNEPs, we identified sufficient conditions
for  the existence of non-NE-cuts
via intersection cuts, which are in particular fulfilled
for pure-integer GNEPs with linear constraints and
linear decoupled cost functions (\Cref{thm: GNEPs}). Determining suitable
conditions under which the B\&C algorithm terminates finitely
when employing ICs deserves further investigation and
remains an open direction for future research.
}

Our numerical results can be seen as a proof of concept and they give
an insight about the instance sizes that can be tackled with the novel
techniques.
Of course, different improvements are possible and interesting future
research directions exist.
For instance, it might be fruitful to derive game-specific node
selection or branching rules as well as to integrate further pruning
techniques from the recent literature.
Finally, we suspect that the usage of intersection cuts can even be
extended to more general convex (instead of linear) cases.
The details are out of the scope of this paper and are left for future work.


\section*{Acknowledgements}

This research has been funded by the Deutsche Forschungsgemeinschaft
(DFG) in the project 543678993 (Aggregative gemischt-ganzzahlige
Gleichgewichtsprobleme: Existenz, Approximation und Algorithmen).
We acknowledge the support of the DFG.
The computations were executed on the high performance cluster
``Elwetritsch'' at the TU Kaiserslautern, which is part of the
``Alliance of High Performance Computing Rheinland-Pfalz'' (AHRP).
We kindly acknowledge the support of RHRK.
Finally, the first and third author thank the DFG for their support
within RTG 2126 ``Algorithmic Optimization''.


\printbibliography

\appendix
\section{Technical Auxiliary Results}

\begin{lemma}
  \label{lem: HelpConc}
  Consider a minimization problem
  \begin{equation*}
    \min \quad f(x) \quad \text{s.t.} \quad x \in P
  \end{equation*}
  with $P$ being a polyhedral feasible set and $f$ a concave objective
  function.
  Then, the set of optimal solutions can be described as the union of
  faces of $P$.
\end{lemma}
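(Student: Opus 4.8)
The plan is to show that for any optimal solution $x^*$, the unique face $F$ of $P$ whose relative interior contains $x^*$ lies entirely in the optimal set; the optimal set will then be the union of these faces over all optimal points, which is exactly the description claimed. First I would dispose of the trivial case in which no optimal solution exists (the empty union of faces). Otherwise, fix an optimal solution $x^*$, set $v \define f(x^*)$ for the optimal value, and let $F$ be the face of $P$ with $x^*$ in its relative interior; I would recall here the standard fact that the relative interiors of the faces of a polyhedron partition it, so such an $F$ exists and is unique.

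Next I would take an arbitrary $y \in F$ and use that $x^*$ lies in the relative interior of $F$: there is some $\varepsilon > 0$ with $z \define x^* + \varepsilon (x^* - y) \in F \subseteq P$, and then $x^*$ is the convex combination $x^* = \tfrac{1}{1+\varepsilon} z + \tfrac{\varepsilon}{1+\varepsilon} y$. Applying concavity of $f$ gives $v = f(x^*) \geq \tfrac{1}{1+\varepsilon} f(z) + \tfrac{\varepsilon}{1+\varepsilon} f(y)$, while optimality of $x^*$ over $P$ yields $f(z) \geq v$ and $f(y) \geq v$. Combining these forces equality throughout, so in particular $f(y) = v$, i.e., $y$ is optimal. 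Since $y \in F$ was arbitrary, $F$ is contained in the set of optimal solutions.

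Finally, since every optimal point is the relative-interior point of exactly one face of $P$, the optimal set equals the union of the faces $F$ obtained this way (a finite union, as $P$ has finitely many faces), which is the desired conclusion. I do not expect a serious obstacle; the only point needing a little care is the relative-interior argument that produces the point $z$ on the far side of $x^*$ within $F$, together with the bookkeeping that each $F$ is genuinely a face of $P$. An alternative route avoiding relative interiors would argue via the vertices and recession cone of $P$ (a concave function is minimized on a union of faces because along any line segment its minimum is attained at an endpoint), but the relative-interior argument above seems the cleanest.
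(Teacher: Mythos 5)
Your proof is correct and follows essentially the same route as the paper's: both arguments fix an optimal point $x^*$, take the face $F$ containing $x^*$ in its relative interior, produce a point $z$ on the far side of $x^*$ along the segment to an arbitrary $y \in F$, and use concavity plus optimality on the convex combination $x^* = \tfrac{1}{1+\varepsilon}z + \tfrac{\varepsilon}{1+\varepsilon}y$ to conclude $F$ consists of optimal solutions. The only cosmetic difference is that the paper phrases the key step as a proof by contradiction (assuming $f(y) > f(x^*)$), whereas you derive $f(y) = f(x^*)$ directly; the underlying inequality chain is identical.
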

\begin{proof}
  We require the following statement that every point~$x$ of $P$ is
  contained in the relative interior of a face $\Fa(x)$ of $P$; see,
  e.g., \textcite[Theorem 18.2]{Rockafellar+1970}.

  Denote now by $\Opt$ the set of optimal solutions.
  Then, it holds that if $x\in \Opt$ is an optimal solution and
  if it is contained in the relative interior of a face $\Fa$, then
  $\Fa \subseteq \Opt$.
  To prove this, consider an arbitrary $y \in \Fa \setminus \{x\}$ and
  assume for the sake of a contradiction that $f(y) > f(x)$.
  Since $x$ is in the relative interior of $\Fa$, there exists
  $\lambda < 0$ such that $z \define x + \lambda( y - x) \in \Fa$.
  Since we have $x = \alpha z + (1-\alpha) y$ for $\alpha =
  1 / (1-\lambda) \in [0,1]$, we get by concavity that
  \begin{equation*}
    f(x) \geq \alpha f(z) + (1-\alpha) f(y)
    > \alpha f(x) + (1-\alpha) f(x)
    = f(x),
  \end{equation*}
  and, thus, end up with the desired contradiction.

  With the two claims collected so far, it follows directly that the
  set of optimal solutions is the union of faces via
  \begin{equation*}
    \Opt \subseteq \bigcup_{x\in \Opt} \Fa(x) \subseteq \Opt.
    \qedhere
  \end{equation*}
\end{proof}

\section{An Example of Call to the Branch-and-Cut}
\label{appendix:BC-example}

\begin{example}
  We illustrate Algorithm~\ref{algorithm:BC} using a knapsack game
  between two players and two items as described in
  Section~\ref{sec:knapsack-game}.
  To stick to minimization in Problem~\eqref{model:CHPR}, the best
  responses of all players are converted to minimization problems.
  The problem of player~1 is given by
  \begin{align*}
    \min_{x_1} \quad
    & - 2 x_{1_1} - 2 x_{1_2} - x_{1_1} x_{2_1} - x_{1_2} x_{2_2}
    \\
    \text{s.t.} \quad
    & x_{1_1} + x_{1_2} \leq 1, \\
    &0 \leq x_{1_j} \leq 1, \quad j = 1,2, \\
    & x_{1_1} \in \nonNegInts
  \end{align*}
  and the problem of player~2 is given by
  \begin{align*}
    \min_{x_2} \quad
    & - 10 x_{2_1} - 9 x_{2_2} -x_{2_1} x_{1_1} - 4 x_{2_2} x_{1_2} \\
    \text{s.t.} \quad
    & 9 x_{2_1} + 8 x_{2_2} \leq 13, \\
    &0 \leq x_{2_j} \leq 1, \quad j=1,2, \\
    & x_{2_1} \in \nonNegInts.
  \end{align*}
  The problem of node 1 as described in~\eqref{model:CHPR} reads
  \begin{align*}
    \min_{(x,\eta)} \quad
    & - 2 x_{1_1} - 2 x_{1_2} - 10 x_{2_1} - 9 x_{2_2} - \eta_1 -
      \eta_2
    \\
    & - x_{1_1} x_{2_1} - x_{1_2} x_{2_2} -x_{2_1} x_{1_1} - 4
      x_{2_2} x_{1_2}
    \\
    \text{s.t.} \quad
    & x_{1_1} + x_{1_2} \leq 1, \\
    & 9 x_{2_1} + 8 x_{2_2} \leq 13, \\
    &-3 \leq \eta_1 \leq 0, \\
    &-\frac{167}{9} \leq \eta_2 \leq 0, \\
    &0 \leq x_{i_j} \leq 1, \quad i=1,2, \ j=1,2.
  \end{align*}
  The B\&C method processes node~1 with Algorithm~\ref{algorithm:BC}.
  It has optimal solution $(x^*,\eta^*)=(0,1,5/9,1,0,0)$ with
  value $-194/9$. The variable $x_{2_1}^*$ is not integer and we
  branch on~$x_{2_1}$.

  Node~2 is then again processed by Algorithm~\ref{algorithm:BC}.
  We consider the root-node problem plus the branching constraint set
  $B_2$ given by $x_{2_1}=1$.
  It has the optimal solution $(x^*,\eta^*)=(0,1,1,1/2,0,0)$
  with value $-19$.
  The solution satisfies all integrality requirements, so the best
  responses with respect to $x^*$ are computed.
  The best response of player~1 to $(x_{2_1}^*,x_{2_2}^*)$ is $(1,0)$
  of value $-3$ and the best response of player~2 to
  $(x_{1_1}^*,x_{1_2}^*)$ is $(1, 1/2)$ of value $-33/2$, so the
  response $x_2^*$ is the best response.
  On the other hand, $x_1^*$ is not a best response to $x_2^*$ so
  $(x_1^*,x_2^*)$ is not an NE.
  Thus, two non-NE-cuts, one for each player, are added to the
  set $C_2$.
  For this example we use equilibrium cuts, which are shown to be
  non-NE-cuts in Section~\ref{sec:cuts-ne}.
  For player~1, the cut $\eta_1 \leq -x_{2_1} - 2$ is derived.
  It cuts off any response that is not better than strategy $(1,0)$.
  For player~2, the cut $\eta_2 \leq -x_{1_1} - 2 x_{1_2} - \frac{29}{2}$
  is derived.
  It removes any response that is not better than strategy $(1,1/2)$.
  We can check that the second cut really cuts the optimal solution of
  the node by evaluating the cut at the solution:
  \begin{equation*}
    0 \nleq -\frac{33}{2}.
  \end{equation*}

  This node is not closed as cuts were derived, so we go back to
  Step~\ref{line:step_1}.
  Thus, the algorithm solves the node problem again but now including
  the additional two cuts.
  The solution obtained is $(x^*,\eta^*)=(1,0,1,1/2,-3,-31/2)$ of
  optimal value~$0$.
  This solution satisfies the integrality requirements and the
  computation of best responses leads to $\Psi(x^*, y^*) = 0$.
  It proves that $x^*$ is an NE of the knapsack game, so the algorithm
  returns~$x^*$.
\end{example}

\section{Explicit Construction of an IC}\label{app: ICConstruction}

In the following, we describe the construction of the
intersection cut, given an NE-free set
$\freeset$ (e.g.,~$\freeset[\varepsilon]_i(x^*,y^*)$) and a cone $\Corner$ pointed at
$(x^*,\proxy^*)$ that contains $\Nespi\cap C_t\cap B_t$.
Here, we adapt the procedure in
\textcite[Section~6]{Conforti-et-al:2014} undertaken for the case
of $\Corner$ being the corner polyhedron and refer to the latter for
an extensive overview of the topic.

We assume that there exist finitely many extreme rays $\extray_j$, $j
\in \Nbas$, with $\norm{\extray_j} = 1$ that determine $\Corner$, i.e.,
every $(x,\proxy) \in \Corner$ can be described by $(x,\proxy) =
(x^*,\proxy^*) +  \sum_{{j\in \Nbas}} \lambda_j \extray_j$ for
suitably chosen $\lambda_j \in \R_{\geq 0}$.
Note that this assumption is fulfilled
for~$\Corner$ being the corner polyhedron of $(x^*,\proxy^*)$ \wrt
$\FeasStrats\times \R^N$; cf.~\textcite{Conforti-et-al:2014}.
For all $j \in \Nbas$, we define
\begin{align}
	\label{def: ICalpha}
	\alpha_j \define \sup \Defset{\alpha \geq 0}{(x^*,\proxy^*) + \alpha
		\extray_j \in \freeset} 
		\end{align}
		and remark that $\alpha_j > 0$ holds since $(x^*,\proxy^*)$ is
		contained in the interior of $\freeset$. 
		
		We further assume that the system of inequalities
		\begin{align}
\label{def: ICCut}
\begin{pmatrix}
	\extray_1^\top\\
	\vdots\\
	\extray_{\abs{\Nbas}}^\top
\end{pmatrix} a \geq
\begin{pmatrix}
	1 / \alpha_1\\
	\vdots\\
	1 / \alpha_{\abs{\Nbas}}
\end{pmatrix}
\end{align}
has a solution, where we set $1 / \alpha_j \define 0$ in case of
$\alpha_j = \infty$.
Note that this system does admit a solution in case of the extreme
rays being linearly independent, which in turn is satisfied for the
corner polyhedron; cf.~\textcite{Conforti-et-al:2014}.

\begin{theorem}\label{thm:ICCut}
Let $\freeset$ 
be an NE-free set,
let $\Corner$ be a cone pointed at $(x^*,\proxy^*)$ that contains
$\Nespi\cap C_t\cap B_t$, and let $a$ be a solution to~\eqref{def:
	ICCut}.
Then, the intersection cut defined by
\begin{align}\label{eq: thm:ICCut}
	c_i(x,\proxy;x^*,\proxy^*,y^*)
	\define
	b -	a^\top (x,\proxy) \leq 0
\end{align}
with $b \define a^\top (x^*,\proxy^*) + 1$  is a
non-NE-cut \wrt $(x^*,\proxy^*)$.
\end{theorem}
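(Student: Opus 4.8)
The plan is to check the two defining conditions of a non-NE-cut from \Cref{def:non-ne-cut} for the inequality \eqref{eq: thm:ICCut}. Condition~\ref{def:non-ne-cut:2} is immediate from the normalization: since $b = a^\top(x^*,\proxy^*) + 1$, we get $c_i(x^*,\proxy^*;x^*,\proxy^*,y^*) = b - a^\top(x^*,\proxy^*) = 1 > 0$, so the cut is strictly violated at $(x^*,\proxy^*)$. The substance of the proof lies in Condition~\ref{def:non-ne-cut:1}, i.e.\ in showing $a^\top(x,\proxy) \geq b$ for every $(x,\proxy) \in \Nespi\cap C_t\cap B_t$. For this I would fix such a point and use that $\Corner \supseteq \Nespi\cap C_t\cap B_t$ together with the assumed ray description of $\Corner$ to write $(x,\proxy) = (x^*,\proxy^*) + \sum_{j\in\Nbas}\lambda_j\extray_j$ with all $\lambda_j\geq0$. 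Multiplying by $a^\top$ and using row $j$ of \eqref{def: ICCut}, namely $a^\top\extray_j = \extray_j^\top a \geq 1/\alpha_j$ with the convention $1/\alpha_j\define0$ when $\alpha_j=\infty$, gives
\[
  a^\top(x,\proxy) - a^\top(x^*,\proxy^*) \;=\; \sum_{j\in\Nbas}\lambda_j\, a^\top\extray_j \;\geq\; \sum_{j\in\Nbas}\frac{\lambda_j}{\alpha_j} .
\]
Hence it suffices to prove $\sum_{j\in\Nbas}\lambda_j/\alpha_j \geq 1$, as this yields $a^\top(x,\proxy)\geq a^\top(x^*,\proxy^*)+1 = b$.

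To establish this I would argue by contradiction: suppose $\Lambda\define\sum_{j\in\Nbas}\lambda_j/\alpha_j < 1$ (only the summands with $\alpha_j<\infty$ contribute). The goal is to show that then $(x,\proxy)$ lies in the interior of the NE-free set $\freeset$, which is impossible since an NE-free set contains no point of $\Nespi\cap C_t\cap B_t$ in its interior. Recall that $(x^*,\proxy^*)\in\inter{\freeset}$ (in particular $\alpha_j>0$ for all $j$), that $(x^*,\proxy^*)+\alpha_j\extray_j\in\cl{\freeset}$ whenever $\alpha_j<\infty$ by the definition \eqref{def: ICalpha} of $\alpha_j$, and that for each $j$ with $\alpha_j=\infty$ the whole ray $(x^*,\proxy^*)+\R_{\geq0}\extray_j$ lies in $\freeset$, so $\extray_j$ is a recession direction of $\freeset$. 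Writing $\lambda_j=(\lambda_j/\alpha_j)\,\alpha_j$ for the finite-$\alpha_j$ terms, one rearranges
\[
  (x,\proxy) = (1-\Lambda)\,(x^*,\proxy^*) + \sum_{j:\,\alpha_j<\infty}\frac{\lambda_j}{\alpha_j}\bigl((x^*,\proxy^*)+\alpha_j\extray_j\bigr) + \sum_{j:\,\alpha_j=\infty}\lambda_j\extray_j .
\]
In the first two groups the coefficients $1-\Lambda$ and $\lambda_j/\alpha_j$ are nonnegative and sum to $1$, and the weight $1-\Lambda$ placed on the interior point $(x^*,\proxy^*)$ is strictly positive; hence that convex combination of points of $\cl{\freeset}$ lies in $\inter{\freeset}$, and adding the nonnegative combination of recession directions $\sum_{j:\alpha_j=\infty}\lambda_j\extray_j$ leaves the point in $\inter{\freeset}$. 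This is the contradiction, so $\Lambda\geq1$.

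The step I expect to be the main obstacle is this last one, i.e.\ turning the intuition ``a convex combination with strictly positive weight on an interior point lies in the interior, and the interior is invariant under adding recession directions'' into a rigorous argument, because $\freeset$ (for instance $\freeset[\varepsilon]_i(x^*,y^*)$) need not be closed, so the endpoints $(x^*,\proxy^*)+\alpha_j\extray_j$ only belong to $\cl{\freeset}$, and some $\extray_j$ contribute only as recession directions. I would dispatch this with the standard facts that for a convex set $C$ with $p\in\inter{C}$ one has $(1-t)p+tq\in\inter{C}$ for all $q\in\cl{C}$ and $t\in[0,1)$ (applied iteratively to the finitely many endpoints), and that $\inter{C}$ is invariant under adding any recession direction of $\cl{C}$; these are precisely the ingredients behind the intersection-cut construction of \textcite{Conforti-et-al:2014}. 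Combining the resulting bound $\sum_{j\in\Nbas}\lambda_j/\alpha_j\geq1$ with the first display gives $a^\top(x,\proxy)\geq a^\top(x^*,\proxy^*)+1=b$ for every $(x,\proxy)\in\Nespi\cap C_t\cap B_t$, which is Condition~\ref{def:non-ne-cut:1}; together with the verification of Condition~\ref{def:non-ne-cut:2} this completes the proof.
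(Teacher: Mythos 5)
Your proposal is correct and follows essentially the same route as the paper: verify Condition~(ii) via the normalization $b=a^\top(x^*,\proxy^*)+1$, and establish Condition~(i) by writing any point of $\Corner$ with $a^\top(x,\proxy)<b$ as a convex combination of the interior point $(x^*,\proxy^*)$ and the boundary points $(x^*,\proxy^*)+\alpha_j\extray_j$, concluding it lies in $\inter{\freeset}$ (the paper cites Theorem~6.1 of \textcite{Rockafellar+1970} for exactly this step) and hence cannot be in $\Nespi\cap C_t\cap B_t$. Your explicit separation of the rays with $\alpha_j=\infty$ as recession directions is in fact a slightly more careful treatment than the paper's decomposition, which formally includes $0\cdot\infty$ terms under the convention $1/\alpha_j=0$.
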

\begin{proof}
The point $(x^*,\proxy^*)$ violates the inequality due to
\begin{align*}
	c_i(x^*,\proxy^*;x^*,\proxy^*,y^*)
	\define
	b -	a^\top (x^*,\proxy^*)
	= 1 + a^\top (x^*,\proxy^*) - a^\top (x^*,\proxy^*) = 1 > 0.
\end{align*}
Next, we argue that the inequality is valid for any $(x,\pi(x)) \in
\Nespi\cap C_t\cap B_t$.
To this end, we prove the following claim.
\begin{claim}\label{claim:thm:ICCut}
	Let $\inter{\freeset}$ denote the interior of $\freeset$.
	Then, the inclusion
	\begin{align*}
		\Corner \cap \defset{(x,\proxy)}{a^\top (x,\proxy) < b}
		\subseteq \inter{\freeset} 
	\end{align*}
	is valid.
\end{claim}
\begin{proofClaim}
	Consider $(x,\proxy) \in \Corner$ with $a^\top (x,\proxy) < b$.
	Then, there exist $\lambda_l\in \R_{\geq 0}$, $l \in{\Nbas}$, with
	$(x,\proxy) = (x^*,\proxy^*) + \sum_{l \in \Nbas} \lambda_l
	\extray_l$.
	By $a^\top (x,\proxy) < b$, we get
	\begin{align*}
		b>a^\top (x,\proxy)
		= a^\top (x^*,\proxy^*) + a^\top\sum_{l \in \Nbas} \lambda_l
		\extray_l.
	\end{align*}
	Since $b = a^\top (x^*,\proxy^*) + 1$, we obtain
	\begin{align}\label{eq: ConvCombHelp}
		1 > a^\top \sum_{l \in \Nbas} \lambda_l \extray_l =
		\sum_{l \in \Nbas} \lambda_l a^\top  \extray_l \geq
		\sum_{l \in \Nbas} \frac{\lambda_l}{\alpha_l},
	\end{align}
	where the last inequality holds by $\lambda_l \geq 0$ and by $a$
	being a solution to~\eqref{def: ICCut}. Now observe that we can describe $(x,\proxy)$ via
	\begin{align*}
		(x,\proxy) = \sum_{l \in \Nbas} \frac{\lambda_l}{\alpha_l}
		\left((x^*,\proxy^*) + \alpha_l\extray_l \right) + \left(1 -
		\sum_{l \in \Nbas} \frac{\lambda_l}{\alpha_l}\right)
		(x^*,\proxy^*).
	\end{align*}
	By \eqref{eq: ConvCombHelp} and  $\alpha_l,\lambda_l \geq 0$, the right-hand side
	of the above description of $(x,\proxy)$
	is a convex combination of elements in the closure of
	$\freeset$
	and an element in the interior of
	$\freeset$.
	This implies that $(x,\proxy)$ is itself in the interior of the latter
	set; see, e.g., \textcite[Theorem 6.1]{Rockafellar+1970}.
	This shows the claim.
\end{proofClaim}
Since $\Nespi \cap C_t \cap B_t \subseteq \Corner$ and $\Nespi \cap
C_t \cap B_t \cap \inter{\freeset} =
\emptyset$
by $\freeset$ being an NE-free
set, it follows from the above claim that the inequality is valid for any
$(x,\pi(x)) \in \Nespi\cap C_t\cap B_t$.
\end{proof}

\begin{remark}\label{rem: CompAlph}
Assume in \Cref{def: ICalpha} that $\alpha_j< \infty$,  $\freeset =
\freeset[\varepsilon]_i(x^*,\proxy^*)$, and
\begin{equation*}
	g_i(y_i^*,\cdot):\prod_{j \neq i}\R^{k_j+l_j}\to \R^{m_i}
\end{equation*}
is lower-semicontinuous.
Then,
\begin{align}
	\label{def: ICalphaClosure}
	\alpha_j = \max \Defset{\alpha \geq 0}{(x^*,\proxy^*) + \alpha
		\extray_j \in \freeset[\varepsilon]_i(x^*,y^*)}
\end{align}
holds.
In particular, if  $x_{-i}\mapsto \pi_i(y_i^*,x_{-i})$ is linear,
then $\freeset[\varepsilon]_i(x^*,y^*)$ is a polyhedron and the
optimization problem in \Cref{def: ICalphaClosure} becomes an LP in
one variable.
\end{remark}

We close this section by an example that shows that ICs are better
than no-good cuts.
\begin{example}
Consider a game with 2 items and 3 players $N = \{1,2,3\}$ in which
all players have to  choose one of the two items.
Item~1 is only available twice, resulting in the jointly
constrained GNEP, i.e., $X_i(x_{-i})=\defset{x_i}{(x_i,x_{-i})\in
	X}$, with feasible strategy set
\begin{equation*}
	X \define \Defset{x \in \Z^6_{\geq 0}}{\sum_{i\in N}x_{i1}\leq 2,
		\, x_{i1} + x_{i2} = 1, \, i \in N}.
\end{equation*}
Suppose that the players' costs are given by $\pi_1(x_1,x_{-1})
= 1 - x_{11}$ and $\pi_i(x_i,x_{-i}) = 1 - x_{i2} + x_{11}$ for $i
\in \set{2,3}$.

As an upper bound for $\proxy$ in \Cref{model:node_t}, we use
\(\proxy^+ \define (2,2,2)\), as no player can incur costs
exceeding~2.
The social optimum is given by $x_i^* = (0,1)$ for $i \in N$.
Together with $\proxy^* = (2,2,2) = \proxy^+$, this is the
optimal solution in the root node of the B\&C tree.
The corresponding best response is given by $y^*_1 \define (1,0)$
and $y_i^*=(0,1)$ for $i \in \set{2, 3}$ with $\VBR_i(x^*) = 0$ for
$i \in \set{1,2,3}$.
The resulting IC (\wrt $\Corner$ being the corresponding corner
polyhedron and $\freeset = \freeset[1]_1(x^*,\proxy^*)$) is then
given by
\begin{align*}
	\proxy_1 \leq \sum_{i \in N} x_{i1},
\end{align*}
demonstrating that ICs are more powerful than no-good cuts.
\end{example}

We close this section with a discussion about the set
$S_i(x^*,y^*)$ and its corresponding IC in the standard NEP setting,
i.e., if $g_i(x) = g_i(x_i)$ only depends on player $i$'s own variables.
In this case, we have
\begin{equation}
\label{eq: EQCut}
\begin{split}
	S_i(x^*,y^*)
	&= \Defset{(x,\proxy) \in \Rall \times \R^N}{ \proxy_i >
		\pi_i(y_i^*,x_{-i})}
	\\
	& = \Defset{(x,\proxy) \in \Rall \times
		\R^N}{(x,\proxy) \text{ violates } \eqref{eq: PlayerCut}}
\end{split}
\end{equation}
for any $i \in N(x^*,\proxy^*)$ and $S_i(x^*,y^*)$ is in fact a
suitable NE-free set as -- in contrast to the general GNEP
case -- $(x^*,y^*)$ belongs to the interior of the
latter set due to $i \in N(x^*,\proxy^*)$.
For the IC derived in \Cref{thm:ICCut} \wrt $S_i(x^*,y^*)$, we get
as an immediate consequence of \Cref{claim:thm:ICCut} and \eqref{eq:
EQCut} that any point that is cut off by the IC, i.e., contained
in
\begin{equation*}
\FeasRt \cap \defset{(x,\proxy)}{a^\top (x,\proxy) < b} \subseteq
\Corner \cap \defset{(x,\proxy)}{a^\top (x,\proxy) < b},
\end{equation*}
is also cut off by the best-response cut, i.e., contained in
$S_i(x^*,y^*)$.

As a special case, if the cost function $\pi_i(y_i^*,x_{-i})$ is
linear in $x_{-i}$, then there exists a vector $a$ solving \eqref{def:
ICCut} such that the corresponding IC defined in \Cref{thm:ICCut}  is
equivalent to the best-response cut $\proxy_i \leq \pi_i(y_i^*,x_{-i})$
in \eqref{eq: PlayerCut}. The properties of the IC in the NEP case are
formally stated in the following lemma.

\begin{lemma}\label{lem: ICCutStandNEP}
For a standard NEP, the IC of \Cref{thm:ICCut} for $\freeset =
S_i(x^*,y^*)$ is at most as strong as the best-response cut defined in
\Cref{lem: Cut}, i.e., any $(x,\proxy)$ satisfying \eqref{eq:
	thm:ICCut} also fulfills \eqref{eq: PlayerCut}. Moreover, if
$x_{-i}\mapsto \pi_i(y_i^*,x_{-i})$ is linear and all $\alpha_j,j \in
\Nbas$, in~\eqref{def: ICalpha} are bounded, then there exists a vector
$a$ solving \eqref{def: ICCut} such that the corresponding IC defined
in \Cref{thm:ICCut}  is equivalent to the best-response cut~\eqref{eq:
	PlayerCut}, i.e.,~$(x,\proxy)$ fulfills \eqref{eq: thm:ICCut} if and
only if it  fulfills \eqref{eq: PlayerCut} in this case.
\end{lemma}
\begin{proof}
As outlined above, the first statement of the \namecref{lem:
	ICCutStandNEP} follows immediately by \Cref{claim:thm:ICCut} and
\eqref{eq: EQCut}.

Now assume that $x_{-i}\mapsto \pi_i(y_i^*,x_{-i})$ is linear and
all $\alpha_j, j \in \Nbas$, in~\eqref{def: ICalpha} are bounded.
Denote by $\pi_i(y^*_i)$ the vector such that $\pi_i(y_i^*,x_{-i}) =
\pi_i(y^*_i)^\top x_{-i}$ for all $x_{-i} \in \FeasStrats[-i]$
and by
$\extray_j^{x} = (\extray_j^{x_{i}}, \extray_j^{x_{-i}}),
\extray_j^{\proxy} = (\extray_j^{\proxy,i})_{i  \in N}$ the partial
vectors of $\extray_j$ such that $\extray_j = (\extray_j^{x},
\extray_j^{\proxy})$.
Now for any $j \in \Nbas$, we have
\begin{align*}
	& \proxy^*_i + \alpha_j\extray_j^{\proxy,i} - \pi_i(y^*_i)^\top
	\left(x^*_{-i} +\alpha_j\extray_j^{x_{-i}}\right) = 0
	\\
	\iff
	& \extray_j^{\proxy,i}-\pi_i(y^*_i)^\top\extray_j^{x_{-i}} =
	\left(\pi_i(y^*_i)^\top x^*_{-i}- \proxy_i^*\right)
	\frac{1}{\alpha_j}
	\\
	\iff
	& \frac{1}{\pi_i(y^*_i)^\top x^*_{-i}- \proxy_i^*}
	\left(\extray_j^{\proxy,i}-\pi_i(y^*_i)^\top\extray_j^{x_{-i}}\right) =
	\frac{1}{\alpha_j}.
\end{align*}

We now define $a$ as the vector with
\begin{align*}
	a^\top (x,\proxy) = \frac{1}{\pi_i(y^*_i)^\top x^*_{-i}-
		\proxy_i^*}
	\left( \proxy_i - \pi_i(y_i^*)^\top x_{-i} \right)
	\text{ for all } (x,\proxy) \in \Rall\times \R^N,
\end{align*}
which then implies that for any $\extray_j$, it holds
\begin{align*}
	a^\top \extray_j
	= \frac{1}{\pi_i(y^*_i)^\top x^*_{-i} - \proxy_i^*}
	\left( \extray_j^{\proxy,i} - \pi_i(y^*_i)^\top
	\extray_j^{x_{-i}}\right)
	= \frac{1}{\alpha_j}.
\end{align*}
Hence, $a$ is a solution to~\eqref{def: ICCut} and the corresponding IC
in \Cref{thm:ICCut} is exactly the best-response cut.
To see this, note that $b = a^\top (x^*,\proxy^*) +1 = -1 +1 = 0$
in~\Cref{thm:ICCut} and, hence, the IC is
\begin{align*}
	& -a^\top (x,\proxy) \leq 0
	\\
	\iff
	& - \frac{1}{\pi_i(y^*_i)^\top x^*_{-i}- \proxy_i^*}
	\left(\proxy_i - \pi_i(y_i^*)x_{-i} \right) \leq 0
	\\
	\iff
	& \proxy_i - \pi_i(y_i^*)x_{-i} \leq 0,
\end{align*}
where we used that $\pi_i(y^*_i)^\top x^*_{-i}- \proxy_i^*<0$ holds
due to $i \in N(x^*,\proxy^*)$.
\end{proof}


\end{document}